\documentclass[10pt,a4paper]{article}
\usepackage[utf8]{inputenc}
\usepackage[T1]{fontenc}
\input{glyphtounicode}
\usepackage{a4wide}
\usepackage{amsmath,amssymb,amsthm}
\IfFileExists{sfrm1000.pfb}{}{\InputIfFileExists{clt-fonts.tex}{}{}}
\IfFileExists{bbold.sty}{\usepackage{bbold}}{%
  \pdfmapline{+cltbb10 GXKDDN+BBOLD10 <cltbb10.pfb}
  \pdfmapline{+cltbb7 ZFJGRA+BBOLD7 <cltbb7.pfb}
  \DeclareFontFamily{U}{cltbb}{}
  \DeclareFontShape{U}{cltbb}{m}{n}{<5-8>cltbb7 <8->cltbb10}{}
  \DeclareMathAlphabet{\mathbb}{U}{cltbb}{m}{n}
}
\usepackage[colorlinks=true,linkcolor=red,citecolor=green,urlcolor=magenta]{hyperref}
\newcommand{\R}{\mathbb{R}}
\newcommand{\E}{\mathbb{E}}
\newcommand{\Prob}{\mathbb{P}}
\newcommand{\Q}{\mathbb{Q}}
\newcommand{\F}{\mathcal{F}}
\newcommand{\Id}{I_d}
\newcommand{\ind}{\mathbf{1}}
\newcommand{\Tr}{\mathrm{Tr}}
\DeclareMathOperator{\Var}{Var}
\newcommand{\laweq}{\overset{\text{\normalsize law}}{=}}
\newtheorem{theorem}{Theorem}
\newtheorem{lemma}[theorem]{Lemma}
\newtheorem{proposition}[theorem]{Proposition}
\newtheorem{corollary}[theorem]{Corollary}
\newtheorem{remark}[theorem]{Remark}
\title{Improved Bounds for the Entropic Central Limit Theorem\\
in the Log-Concave Case.}
\author{Tom Maitre}
\date{October 9, 2026}

\begin{document}
\maketitle
\begin{abstract}
Let $\mu$ be a log-concave isotropic probability measure on $\R^d$, and let $\mu^n$ denote the law of
the normalized sum of $n$ independent random vectors with law $\mu$. We prove that
\[
 H(\mu^n|\gamma)\leq \frac{C}{n}
 \min\left\{1+\log(ed)^2H(\mu|\gamma)\ ,\ d\right\}.
\]
Compared to the previous best known bound, our estimate replaces the polynomial dependence on the dimension by a polylogarithmic one. The proof relies on a new combination of
the De Bruijn identity with Eldan's stochastic localization process.
\end{abstract}

\section{Introduction and preliminaries}\label{sec:introduction}
Let $\mu$ be a probability measure on $\R^d$ with mean zero and identity covariance matrix and let
$(X_i)_{i\geq 1}$ be i.i.d.\ random vectors with law $\mu$. The Central Limit Theorem asserts that
\[
 \frac{1}{\sqrt{n}}\sum_{i=1}^n X_i
\]
converges in distribution to the standard Gaussian measure $\gamma$ on $\R^d$. In this paper, we study
convergence in relative entropy. For $\nu$ be a probability measure on $\R^d$. If $\mu$ is absolutely continuous
with respect to $\nu$, we define the relative entropy by
\[
 H(\mu|\nu)=\int_{\R^d}\log\left(\frac{d\mu}{d\nu}\right)d\mu.
\]
In \cite{Bar86}, Barron proved that if $H(\mu\mid\gamma)<\infty$, and letting $\mu^n$ be the law of
$\frac{1}{\sqrt n}\sum_{i=1}^n X_i$, then
it satisfies the entropic Central Limit Theorem,
\[
 \lim_{n\to\infty} H(\mu^n\mid\gamma)=0.
\]
We are interested in quantitative versions of this result. In dimension $d=1$, Ball, Barthe and
Naor \cite{Bal02} established entropy jump inequalities for measures satisfying a Poincaré inequality
and having finite entropy. Their estimates combined with the monotonicity of the relative entropy
(see \cite{Art04}) imply that, for every $n$,
\[
 H(\mu^n\mid\gamma)\leq \frac{2^\epsilon}{n^\varepsilon}H(\mu\mid\gamma),
\]
where $\varepsilon=-\log_2\left(1-\frac{1}{2+2C_P(\mu)}\right)$ and $C_P(\mu)$ denotes the Poincaré constant. Ball and Nguyen
\cite{Bal12} later extended this result to arbitrary dimensions, obtaining an analogous estimate but
only in the log-concave case.

In \cite{Cou19}, Courtade, Fathi and Pananjady established
\[
 H(\mu^n|\gamma)\leq\frac{dC_P(\mu)}{n}
 \log\left(1+\frac{I(\mu|\gamma)n}{d(C_P(\mu)-1)}\right)
\]
where $I(\mu|\gamma)$ is the relative Fisher information with respect to the standard Gaussian measure
(see below for the definition). This improves upon \cite{Bal12} in the sense that it does not require
log-concavity, and it has nearly optimal dependence on $n$. On the other hand, it requires finite
Fisher information, which is quite restrictive.

Eldan, Mikulincer, and Zhai \cite{Eld20} obtained a quantitative bound on the rate of convergence in
the log-concave and isotropic case :
\[
 H(\mu^n|\gamma)\leq C\frac{d^{10}}{n}\left(1+H(\mu|\gamma)\right),
\]
where $C$ is a universal constant. Their estimate improves upon result of Courtade and his coauthors by removing the extra logarithmic factor $\log n$, thereby obtaining the optimal $1/n$ rate,
and replacing the dependence on the Fisher information by a dependence on the relative entropy
yields a smaller right-hand side, so this is a stronger estimate in view of the Gaussian logarithmic
Sobolev inequality. On the other hand, the dependence on the dimension is much worse. Kannan,
Lovász and Simonovits proved in \cite{Kan95} that, if $\mu$ is the uniform probability measure on an
isotropic convex body in $\R^d$, then $C_P(\mu)=O(d)$. They conjectured that this estimate could be
replaced by the dimension-free bound $C_P(\mu)=O(1)$; in its usual formulation, this is conjectured
for every isotropic log-concave probability measure. The KLS conjecture has recently been proved
by Bizeul, Klartag and Lehec \cite{Biz26b} and \cite{Son26}. Consequently, the bound obtain by Courtad and al retains a
linear dependence on $d$ in its prefactor, which remains substantially better than the earlier $d^{10}$
dependence.

More recently, \cite{Den26} established the bound
\[
 H(\mu^n\mid\gamma)\leq\frac{2d\,C_P(\mu)}{n}H(\mu|\gamma),
\]
for all sufficiently large $n$ and every isotropic probability measure $\mu$ with finite Poincaré constant.
This significantly strengthens the result of Eldan and his coauthors, in two ways. On the one hand,
it does not require log-concavity. On the other hand, when the measure is log-concave, it improves
upon the dependence on the dimension.

Actually, in a sense the result dates back to Johnson
and Barron \cite{Joh04} in 2004, although the latter only considered the one-dimensional case. However, this extension to arbitrary dimensions is rather straightforward. It seems that Johnson and
Barron's result was overlooked by subsequent authors.

Our main result is the following theorem:
\begin{theorem}\label{thm:main}
Let $\mu$ be an isotropic log-concave probability measure on $\R^d$, and let $X$ be a random
vector with distribution $\mu$. Let $X_1,\ldots,X_n$ be i.i.d.\ copies of $X$ and denote by $\mu^n$ the distribution
of $\frac{1}{\sqrt n}\sum_{i=1}^n X_i$. Then
\[
 H(\mu^n|\gamma)\lesssim\frac{1}{n}
 \min\left\{1+\log^2(d)H(\mu|\gamma)\ ,\ d\right\}.
\]
\end{theorem}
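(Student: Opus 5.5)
We will prove the two bounds in the minimum separately, and both will come from a single representation of $H(\mu^n|\gamma)$ along the heat flow. Let $S_n=n^{-1/2}\sum_i X_i$ and, for $t\in[0,1]$, set $Y_t=\sqrt{1-t}\,S_n+\sqrt t\,G$, where $G\sim\gamma$ is independent. The De Bruijn identity, in its Ornstein--Uhlenbeck form, gives
\[
 H(\mu^n|\gamma)=\int_0^1 \frac{1}{2t}\,\Bigl(\E\bigl|\E[G\mid Y_t]\bigr|^2-\text{(matching term)}\Bigr)\,dt .
\]
Written more usefully, $H(\mu^n|\gamma)=\frac12\int_0^1 \frac{1}{1-t}\Tr\bigl(\Id-\E\,\Sigma_t\bigr)\,dt$ up to the standard normalisation, where $\Sigma_t$ is the rescaled conditional covariance $\mathrm{Cov}(S_n\mid Y_t)$. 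So everything reduces to showing that the conditional covariance of $S_n$ given a noisy observation stays close to $\Id$ when $t$ is not too small.

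The key structural step is to realise this conditional law through Eldan's stochastic localization. Observing $Y_t$ is equivalent to running the localization process of $\mu^n$ up to time $s=(1-t)/t$, and $\mathrm{Cov}(S_n\mid Y_t)=A_s$, the covariance of the localized measure. Because $S_n$ is a normalized sum of independent vectors, we will instead localize each $X_i$ separately. The pair $(X_i,\,\sqrt{s/n}\,X_i+W^i)$ has the same structure, and conditioning on the sum only is dominated, via convexity and data processing for the Fisher-information/MMSE quantity, by the product localization. This is where the $1/n$ appears: the effective localization time for each summand is $s/n$, so
\[
 \Tr(\Id-\E A^{(n)}_s)\ \lesssim\ \Tr\bigl(\Id-\E A_{s/n}\bigr)\ \text{(up to an } n\text{-fold averaging gain)} .
\]
We therefore need estimates on $\Tr(\Id-\E A_u)$ for a single isotropic log-concave $\mu$ and small $u$. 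Two such estimates are available.

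\emph{(a) The dimension bound.} From the martingale identity $d\,\E A_u=-\E A_u^2\,du$ and the bound $\|A_u\|_{op}\lesssim 1$, which holds for $u\lesssim 1$ by the now-proved KLS conjecture (Bizeul--Klartag--Lehec, Song) together with the standard Eldan--Lee--Vempala control of $\|A_u\|$, we get $\Tr(\Id-\E A_u)\lesssim du$. Inserting $u=s/n$ and integrating against the De Bruijn weight, the large-$s$ part contributes at most $d/n$ as well, since $\Tr(\Id-A)\le d$. This gives $H(\mu^n|\gamma)\lesssim d/n$.

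\emph{(b) The entropy bound.} Here we use the Föllmer/entropy representation $H(\mu|\gamma)=\frac12\int_0^\infty \E\,\Tr\bigl((\Id-A_u)^2\bigr)/(1+u)^2\,du$, or a comparable form. It shows that the integrated quadratic deviation of $A_u$ from $\Id$ is controlled by $H(\mu|\gamma)$, whereas we need the linear quantity $\Tr(\Id-\E A_u)$ at times $u\sim 1/n$. To connect the two we use $\Tr(\Id-\E A_u)=\int_0^u\E\Tr A_r^2\,dr$ and split according to the operator norm. On the event where $\|A_r\|\le 1+\epsilon$ the contribution is bounded by the quadratic term plus a small linear term, which yields the constant $1$. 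The tail events where $\|A_r\|$ is large are handled by the KLS-based sub-Gaussian/exponential tail estimate $\Prob(\|A_r\|\ge k)\le e^{-ck}$ on time windows of dyadic size. The dyadic decomposition over scales $r\in[2^{-j-1},2^{-j}]$, with $j$ up to $\log d$, produces one $\log d$ factor, and the union over operator-norm levels produces the other.

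\textbf{Main obstacle.} I expect the hardest step to be this last transfer from the quadratic, entropy-controlled quantity to the linear trace deficit at small times. The loss must be only $\log^2 d$, not polynomial in $d$. My first attempt would use a Gronwall-type monotonicity of $u\mapsto\E\Tr(\Id-A_u)^2$ near $0$, combined with the KLS operator-norm tail, to compare each time $u\sim 1/n$ with an averaged window of the entropy integral.
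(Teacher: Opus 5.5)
There is a genuine gap: your argument has no valid mechanism for the factor $1/n$, and the step you call the key structural step is false. If each summand is observed at signal-to-noise ratio $s$, as $\sqrt{s}X_i+Z_i$, then $n^{-1/2}\sum_i(\sqrt{s}X_i+Z_i)=\sqrt{s}\,S_n+Z$. So the product localization at time $s$ dominates the localization of $\mu^n$ at the same time $s$, not at time $s/n$. Data processing therefore gives only $\Tr\E A^{(n)}_s\ge\Tr\E A_s$. Plugging this into the I-MMSE form of De Bruijn's identity, $H(\nu|\gamma)=\frac12\int_0^\infty\Tr\bigl(\frac{\Id}{1+s}-\E A^{\nu}_s\bigr)ds$ for isotropic $\nu$ (which follows from Lemma~\ref{lem:entropy-barycenter}), yields just $H(\mu^n|\gamma)\le H(\mu|\gamma)$. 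Convexity and data processing are first-order tools: they give monotonicity, never a rate.

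Your displayed comparison $\Tr(\Id-\E A^{(n)}_s)\lesssim\Tr(\Id-\E A_{s/n})$ already fails, with any constant independent of $n$, for $\mu=\gamma$: the two sides are $\frac{sd}{1+s}$ and $\frac{sd}{n+s}$. The quantity you then try to control in (b), the linear deficit $\Tr(\Id-\E A_u)$ at $u\sim1/n$, is also the wrong target. It equals $\frac{ud}{1+u}$ for $\mu=\gamma$, so it is of order $d/n$ even when $H(\mu|\gamma)=0$, and it cannot be bounded by $\frac1n(1+\log^2 d\,H(\mu|\gamma))$. The ``quadratic-to-linear transfer'' you flag as the main obstacle is therefore not a step that can be completed. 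The entropy formula you quote in (b) is also off, since it does not vanish at $\gamma$; the correct identity (Proposition~\ref{prop:entropy-covariance}) is $H(\mu|\gamma)=\frac12\int_0^\infty(1+t)\E\Tr[(A_t-\frac{\Id}{1+t})^2]dt$.

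The $1/n$ is a variance effect, and the paper obtains it through the martingale embedding rather than through the heat flow of $\mu^n$. Writing $X_i=\int_0^\infty A_{i,t}\,dW^i_t$, one has $S_n=\int_0^\infty\Gamma_t\,d\widetilde W_t$ with $\Gamma_t=(\frac1n\sum_iA_{i,t}^2)^{1/2}$. The Girsanov-type bound of Proposition~\ref{prop:variational}, with the deterministic choice $m_t=(\E A_t^2)^{1/2}$, charges only the deviation of $\Gamma_t$ from $m_t$. By Lemma~\ref{lem:square-root} and independence of the $A_{i,t}$, this costs $\frac1n\E\Tr[(A_t^2-\E A_t^2)^2(\E A_t^2)^{-1}]$ (Corollary~\ref{cor:entropy-clt}), and that is where $1/n$ comes from. $H(\mu|\gamma)$ then enters through the quadratic quantity, via $\E\Tr[(A_t^2-\E A_t^2)^2]\le\E\Tr[(A_t-\frac{\Id}{1+t})^2(A_t+\frac{\Id}{1+t})^2]$, together with $\|(\E A_t)^{-1}\|\le1+C_P(\mu)t$ and KLS.

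The $\log^2 d$ also has a different origin from the one you suggest. The tail bound $\Prob(\|A_t\|\ge2)\le e^{-c/t}$ is available only for $t\le t_1=c/\log d$; you cannot assume $\|A_u\|\lesssim1$ for all $u\lesssim1$ as in (a). For $t\ge t_1$ one only has $A_t\preccurlyeq\Id/t$, so Lemma~\ref{lem:matrix-fluctuation} with $M=tA_t$, $r=t/(1+t)$ costs $(1+t)^2/t^2\lesssim\log^2 d$. The $d/n$ bound likewise goes through Corollary~\ref{cor:entropy-clt}: it uses $\E\Tr A_t^4\le Cd$ on $[0,1]$, the same fluctuation lemma on $[1,\infty)$, and finally $H(\mu|\gamma)\lesssim d$ from the slicing theorem.
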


The resolution of Bourgain's slicing problem implies that \(H(\mu\mid\gamma)\lesssim d\). Thus, when the initial entropy is of order \(d\), the second bound yields the rate \(d/n\), which is optimal in general. The first bound provides a refinement for measures with smaller initial entropy, improving this rate whenever
\(
H(\mu\mid\gamma)=o\!\left(\frac{d}{\log^2 d}\right).
\)
An ideal strengthening would be to remove both the logarithmic factor and the additive term, yielding
\[
H(\mu^n\mid\gamma)\leq \frac{C}{n}\,H(\mu\mid\gamma)
\]
for a universal constant \(C>0\). There is no apparent obstruction to such an estimate: it would recover the optimal order \(d/n\) when the initial entropy is comparable to \(d\), while providing an improvement throughout the regime \(H(\mu\mid\gamma)=o(d)\).
Compared with \cite{Den26}, this yields
a genuine gain since the slicing theorem (\cite{Kla24c}) implies $H(\mu|\gamma)\lesssim d$, so the improvement can
be of order $d$. The recent resolution of the KLS conjecture by Bizeul, Klartag and Lehec \cite{Biz26b}
makes the Poincaré constant universally bounded. Nevertheless, the estimate of \cite{Den26} retains
a linear dependence on $d$ in its prefactor, whereas our first bound involves only a polylogarithmic
factor. Our two estimates are therefore complementary: one improves the dimensional dependence,
while the other removes an entropy factor that may itself be of order $d$. Our approach, however,
relies crucially on log-concavity, and we do not currently know how to obtain comparable bounds
without this assumption.

The same strategy yields an improvement of the result of \cite{Eld20} in the isotropic, $\kappa$-uniformly log-concave setting, giving the following dimension-free bound. Note that the condition $\kappa\leq1$ follows
from the Brascamp--Lieb inequality and the isotropy of $\mu$, and therefore imposes no additional
restriction in the following theorem.
\begin{theorem}\label{thm:uniform}
Let $\kappa\in]0,1]$, and let $\mu$ be an isotropic $\kappa$-uniformly log-concave probability measure
on $\R^d$. With the notation of Theorem \ref{thm:main}, there exists a universal constant $C>0$ such that, for
every $n\geq1$,
\[
 H(\mu^n|\gamma)\leq\frac{C}{\kappa^2 n}H(\mu|\gamma).
\]
\end{theorem}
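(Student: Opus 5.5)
We prove Theorem \ref{thm:uniform} by combining the De Bruijn identity along the Ornstein--Uhlenbeck semigroup with Eldan's stochastic localization. Uniform log-concavity gives a deterministic bound on the covariance of the localized measures, and this is what will make the constant dimension-free.

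\textbf{De Bruijn representation.} Write $H(\nu|\gamma)=\int_0^\infty I(P_t\nu|\gamma)\,dt$, where $P_t\nu$ is the law of $e^{-t}Y+\sqrt{1-e^{-2t}}G$. Since $P_t(\mu^n)=(P_t\mu)^n$, it suffices to bound the Fisher information of normalized sums of i.i.d.\ copies of the smoothed measure $\mu_t=P_t\mu$. Recall that $\mu_t$ is $\kappa_t$-uniformly log-concave with $\kappa_t=(e^{-2t}\kappa^{-1}+1-e^{-2t})^{-1}\geq\kappa$, and it has finite Fisher information.

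\textbf{Fisher information via a projection inequality.} The score of a sum is the conditional expectation of the average of the individual scores. Following Johnson--Barron and Artstein--Ball--Barthe--Naor, I would prove a ``$1/n$'' inequality for the relative Fisher information of $\mu_t^n$. The idea is to write the relative score $\rho_t=\nabla\log(d\mu_t/d\gamma)$ and use a variance-drop (Efron--Stein/Hoeffding) decomposition: only the linear, first-order part of $\rho_t$ survives without the $1/n$ factor. That part vanishes because $\mu_t$ is isotropic-like, with mean zero and covariance $e^{-2t}I_d+(1-e^{-2t})I_d=I_d$. The remaining part is bounded by $C_P(\mu_t)$ times a second-order quantity. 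The target inequality is
\[
I(\mu_t^n|\gamma)\leq \frac{C}{n\kappa^2}\,J(t),
\]
where $J(t)$ is a quantity satisfying $\int_0^\infty J(t)\,dt\lesssim H(\mu|\gamma)$. The factor $\kappa^{-2}$ comes from two applications of the Brascamp--Lieb bound $C_P(\mu_t)\leq\kappa_t^{-1}\leq\kappa^{-1}$ (equivalently $\mathrm{Cov}\preceq\kappa^{-1}I_d$).

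\textbf{Identifying $J(t)$ via stochastic localization.} Run Eldan's process on $\mu$, producing measures $\mu_s$ with barycenters $a_s$ and covariances $A_s$. The law of $a_s$ equals the law of $sX+\sqrt s G$ rescaled, which is the OU time change. The entropy of $\mu$ then has the representation $H(\mu|\gamma)=\tfrac12\int_0^\infty \E\,\Tr\big((I_d-A_s)^2\big)\,\cdots\,ds$, or more precisely $\tfrac12\int_0^\infty \E\|A_s-\frac{1}{1+s}I_d\|_{HS}^2$-type terms coming from the Itô expansion of $\log$ of the density. Under $\kappa$-uniform log-concavity, $A_s\preceq(\kappa+s)^{-1}I_d$ almost surely. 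This deterministic bound lets us control the fluctuation terms of $I(\mu_t^n|\gamma)$, namely the Fisher information of a sum expressed through the matrices $A_s$ of the summands, by $\kappa^{-2}n^{-1}$ times the same integrand. This is where the argument improves on \cite{Eld20}, which lost powers of $d$ by using only moment bounds on $\|A_s\|_{op}$.

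\textbf{Main obstacle.} The hard step is the third: matching the Fisher information of the sum at OU time $t$ with the localization integrand, so that it is controlled by $J(t)$ with no dimensional loss. I would first try the conditional-covariance formula
\[
I(P_t\nu|\gamma)=\frac{e^{-4t}}{(1-e^{-2t})^{2}}\,\E\,\Tr\big((I_d-\mathrm{Cov}(Y\mid Y_t))^2\big)\quad\text{(up to normalization)}.
\]
For the sum, $\mathrm{Cov}(S\mid S_t)$ is a conditional average of $n$ independent localized covariances. The variance of that average gives the factor $1/n$, controlled using $A\preceq\kappa^{-1}I_d$. Integrating in $t$ then yields the claim.
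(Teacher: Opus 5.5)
\textbf{There is a genuine gap.} The decisive inequality $I(P_t\mu^n|\gamma)\le \frac{C}{n\kappa^2}J(t)$ is only a target: $J(t)$ is never specified, and you say yourself that matching it with the localization integrand is open. The route you sketch to close it does not work.

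First, Fisher information at a fixed time is not quadratic in the conditional covariance. Tweedie's formula and isotropy give
\[
I(P_t\nu|\gamma)=\frac{e^{-2t}}{(1-e^{-2t})^2}\Bigl[(1-e^{-2t})\,d-\E\Tr\,\mathrm{Cov}(Y\mid Y_t)\Bigr],
\]
which is \emph{linear} in the mean conditional covariance. Your formula fails already for $\nu=\gamma$: there $\mathrm{Cov}(Y\mid Y_t)=(1-e^{-2t})\Id$, so your right-hand side is nonzero. Quadratic terms such as $\E\Tr[((1+s)A_s-\Id)^2]$ appear only after differentiating in time (Itô isometry, or $\frac{d}{dt}\E A_t=-\E A_t^2$).

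Second, because of this linearity, the fluctuations of $\frac1n\sum_i A_i$ never enter the sum's Fisher information. The law of total covariance only gives $\E\,\mathrm{Cov}(S\mid S_t)\succcurlyeq\E[\frac1n\sum_i A_i]$, hence $I(P_t\mu^n|\gamma)\le I(P_t\mu|\gamma)$, with no $1/n$. Extracting $1/n$ from the sum's own score needs a Johnson--Barron projection argument. Its known high-dimensional versions (cf.\ \cite{Den26}) lose a factor $d$, and your plan does not say how uniform log-concavity removes it.

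Two smaller errors. $a_s=\E[X\mid\theta_s]$ is not a rescaling of $sX+\sqrt s\,G$; that is the law of $\theta_s$. The exact entropy formula also carries a weight $(1+s)$.

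\textbf{The missing idea} is to give up on the sum's own score or localization and prove an \emph{upper} bound using the richer filtration of all $n$ independent localizations.
\begin{itemize}
\item Write $X_i=\int_0^\infty A_{i,t}\,dW^i_t$. Then $\frac1{\sqrt n}\sum_i X_i=\int_0^\infty\Gamma_t\,d\widetilde W_t$, where $\Gamma_t=(\frac1n\sum_iA_{i,t}^2)^{1/2}$ and $\widetilde W$ is a Brownian motion by Lévy's characterization.
\item Apply the Girsanov/Lehec bound, Proposition~\ref{prop:variational}, with the deterministic $m_t=(\E A_t^2)^{1/2}$. This choice is admissible because $\int_t^\infty\E A_s^2\,ds=\E A_t$. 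The cost is $\frac12\int_0^\infty\E\Tr[(\E A_t)^{-1}(\Gamma_t-m_t)^2]\,dt$.
\item Lemma~\ref{lem:square-root} and independence bound this by $\frac1{2n}\int_0^\infty\|(\E A_t)^{-1}\|\,\E\Tr[(A_t^2-\E A_t^2)^2(\E A_t^2)^{-1}]\,dt$. This is where the variance of an average of independent covariances really produces $1/n$: the random $\Gamma_t$ is compared with a deterministic $m_t$.
\item Uniform log-concavity then enters twice through $A_t\preccurlyeq(\kappa+t)^{-1}\Id$, which matches your heuristic of two factors $\kappa^{-1}$.
  \begin{itemize}
  \item It gives $\E A_t\succcurlyeq\frac{\kappa}{\kappa+t}\Id$, via $\frac{d}{dt}((\kappa+t)\E A_t)\succcurlyeq0$.
  \item It allows Lemma~\ref{lem:matrix-fluctuation} with $M=(\kappa+t)A_t$ and $r=\frac{\kappa+t}{1+t}$.
  \end{itemize}
\item Finally, use $\kappa(1+t)\le\kappa+t$ and $H(\mu|\gamma)=\frac12\int_0^\infty(1+t)\,\E\Tr[(A_t-\frac{\Id}{1+t})^2]\,dt$ to get $\frac{16}{n\kappa^2}H(\mu|\gamma)$.
\end{itemize}
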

The isotropy assumption does not cause any loss of generality. Indeed, let $\mu$ be a centered log-concave probability measure with positive definite covariance matrix $\Sigma$, and let $\widetilde\mu=(\Sigma^{-1/2})_\#\mu$.
Then $\widetilde\mu$ is log-concave and isotropic. Moreover, if $\gamma_\Sigma$ denotes the centered Gaussian measure with
covariance matrix $\Sigma$, the invariance of the relative entropy under invertible linear transformations
gives
\[
 H(\widetilde\mu|\gamma)=H(\mu|\gamma_\Sigma)
 \qquad\text{and}\qquad
 H(\widetilde\mu^{\,n}|\gamma)=H(\mu^n|\gamma_\Sigma).
\]
The paper is organized as follows. In Section 1 we recall several results on stochastic localization.
In Section 2 we prove a variant of the De Bruijn formula using elementary tools from stochastic
localization. In Section 3 we establish a variational representation of the entropy using stochastic
calculus. Finally, in Section 4 we recall additional properties of log-concave measures and combine
the previous results to prove Theorem \ref{thm:main} and Theorem \ref{thm:uniform}.

\section{Stochastic localization}\label{sec:localization}
Let $\mu$ be a probability measure on $\R^d$ and $t>0$. We denote by $\rho$ the density of $\mu$ and by $\gamma_t$ the
centered Gaussian measure with covariance matrix $t\Id$. Moreover, we write $\rho_t$ for the density of
the convolution $\mu*\gamma_t$. We define the Laplace transform of $\mu$ by, for all $\theta\in\R^d$ and $t>0$,
\begin{equation}\label{eq:laplace}
 \Lambda(t,\theta)=\log\int_{\R^d}
 \exp\left(\langle x,\theta\rangle-t|x|^2/2\right)\mu(dx).
\end{equation}
For simplicity, we assume that $e^{\langle x,\theta\rangle}$ is integrable with respect to $\mu$, for all $\theta\in\R^n$. In this case,
$\Lambda(t,\theta)$ is $C^\infty$ on $[0,\infty)\times\R^d$. We also introduce the measure $(\mu_{t,\theta})$ defined by
\begin{equation}\label{eq:tilt}
 d\mu_{t,\theta}(x)=
 \exp\left(-\Lambda(t,\theta)+\langle x,\theta\rangle-t|x|^2/2\right)\mu(dx).
\end{equation}
We denote by $a(t,\theta)$ the barycenter of the measure $\mu_{t,\theta}$ and by $A(t,\theta)\in\R^{d\times d}$ the covariance
matrix associated to $\mu_{t,\theta}$, defined as
\[
 A(t,\theta)=\int_{\R^d}x^{\otimes2}\mu_{t,\theta}(dx)-a(t,\theta)^{\otimes2}.
\]
Moreover, we introduce the process $(\theta_t)$ given by
\begin{equation}\label{eq:observation}
 \theta_t=tX+B_t
\end{equation}
where $(B_t)$ is a Brownian motion, and $X$ has distribution $\mu$, and is independent of $(B_t)$. We
abbreviate $a_t=\nabla\Lambda(t,\theta_t)$ the barycenter of $\mu_{t,\theta_t}$ and $A_t=\nabla^2\Lambda(t,\theta_t)$ the covariance matrix
process of $\mu_{t,\theta_t}$. We denote by $(\F_t)$ the natural filtration generated by $(\theta_t)$.

\begin{lemma}\label{lem:conditional}
Let $\phi:\R^d\to\R$ be a bounded continuous function. We have
\[
 \int_{\R^d}\phi\,d\mu_{t,\theta_t}
 =\E[\phi(X)|\F_t]=\E[\phi(X)|\theta_t].
\]
In particular, the process $\left(\int_{\R^d}\phi\,d\mu_{t,\theta_t}\right)_{t\geq0}$ is a martingale with respect to the filtration $(\F_t)_t$ generated by $(\theta_s)_{s\leq t}$
\end{lemma}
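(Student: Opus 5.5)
The plan is to prove both equalities in one step by computing the conditional law of $X$ given the whole path $(\theta_s)_{s\leq t}$ with a Girsanov/Bayes argument, and then observing that this law depends on the path only through $\theta_t$.

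First fix $x\in\R^d$ and condition on $X=x$. Then $(\theta_s)_{s\le t}$ is a Brownian motion with constant drift $x$. By Girsanov's theorem, its law on $C([0,t],\R^d)$ is absolutely continuous with respect to the Wiener measure $W$, with density
\[
 \frac{dP_x}{dW}(\omega)=\exp\left(\langle x,\omega_t\rangle-\frac{t|x|^2}{2}\right).
\]
For a constant drift the stochastic integral $\int_0^t\langle x,d\omega_s\rangle$ equals $\langle x,\omega_t\rangle$, so this is explicit. The key point is that the density depends on the path only through its endpoint $\omega_t$.

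Next, since $X$ is independent of $B$, the joint law of $(X,(\theta_s)_{s\le t})$ is $\mu(dx)\,P_x(d\omega)$. By Bayes' formula, for bounded measurable $\phi$,
\[
 \E[\phi(X)\mid\F_t]=\frac{\int\phi(x)e^{\langle x,\theta_t\rangle-t|x|^2/2}\mu(dx)}{\int e^{\langle x,\theta_t\rangle-t|x|^2/2}\mu(dx)}=\int\phi\,d\mu_{t,\theta_t}.
\]
To make this rigorous, I will check the defining property directly. Take $G$ a bounded functional of $(\theta_s)_{s\le t}$ and compute $\E[\phi(X)G]$ by Fubini in the form $\int\int\phi(x)G(\omega)\frac{dP_x}{dW}(\omega)\,W(d\omega)\,\mu(dx)$. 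I will compare this with $\E\left[\left(\int\phi\,d\mu_{t,\theta_t}\right)G\right]$, written in the same way, and exchange the integrals. The normalizing constant $e^{\Lambda(t,\omega_t)}$ cancels against the $W$-density of the law of $\theta$, which is $\int\frac{dP_x}{dW}\,\mu(dx)=e^{\Lambda(t,\omega_t)}$. Positivity and finiteness of $\Lambda$ follow from the standing integrability assumption.

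Because the right-hand side is a measurable function of $\theta_t$ alone, and $\sigma(\theta_t)\subset\F_t$, the tower property gives $\E[\phi(X)\mid\theta_t]=\E[\phi(X)\mid\F_t]$ as well. The martingale statement then follows at once, since $\E[\phi(X)\mid\F_t]$ is a Doob martingale of the bounded variable $\phi(X)$ for the increasing filtration $(\F_t)$.

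The only delicate point is the Girsanov/Fubini justification. This means ensuring measurability of $(x,\omega)\mapsto dP_x/dW$ and integrability of the exponential. For a constant drift this is elementary, since it is just the Cameron–Martin formula, so I expect no real obstacle beyond careful bookkeeping. Continuity of $\phi$ is not needed.
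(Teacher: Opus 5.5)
Your proof is correct, but it takes a different route from the paper.

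The paper argues by time inversion. With $\widetilde B_u=uB_{1/u}$ it writes $\theta_s/s=X+\widetilde B_{1/s}$. It then splits $\F_t$ as $\sigma(X+\widetilde B_{1/t})\vee\sigma(\widetilde B_u-\widetilde B_{1/t},\,u\geq 1/t)$ and drops the second $\sigma$-field using the Markov property of $\widetilde B$; this is where $\F_t$ reduces to $\sigma(\theta_t)$. Next it computes $\E[f(X)\mid X+B_s]=Q_sf(X+B_s)$, with $Q_sf=P_s(f\rho)/P_s\rho$, via self-adjointness of the heat semigroup. Finally it recognizes $Q_{1/t}f(\theta_t/t)$ as $\int f\,d\mu_{t,\theta_t}$.

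You replace both steps with a single Cameron--Martin likelihood ratio on path space. Because $dP_x/dW$ depends on $\omega$ only through $\omega_t$, Bayes' formula gives the tilted measure and the sufficiency of $\theta_t$ at the same time. The Fubini exchange is harmless, since $\iint \frac{dP_x}{dW}\,dW\,d\mu=1$.

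What each route buys:
\begin{itemize}
\item Your version is more self-contained and slightly more general. It needs neither a Lebesgue density for $\mu$ (the paper assumes one in the self-adjointness step) nor continuity of $\phi$.
\item The paper's version makes explicit that $\theta_t/t$ is the output $X+t^{-1/2}G$ of a Gaussian channel. This is exactly the time change $s=1/t$ reused in Lemma~\ref{lem:entropy-barycenter} and Lemma~\ref{lem:inverse-mean}.
\end{itemize}

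One small correction: finiteness of $\Lambda(t,\theta)$ for $t>0$ does not rely on the standing exponential integrability assumption. It follows from $\langle x,\theta\rangle-t|x|^2/2\leq|\theta|^2/(2t)$, and $e^{\Lambda}>0$ holds automatically. At $t=0$ you only need $\theta_0=0$.
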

\begin{proof}
Let $t>0$. As $\theta_s/s=X+B_s/s=X+\widetilde B_u$ where $u=\frac1s$ and $\widetilde B_u$ is a Brownian motion
independent of $X$, we have
\[
 \E[f(X)|\F_t]=\E[f(X)|(\theta_s)_{s\leq t}]
 =\E[f(X)\mid(\theta_s/s)_{s\leq t}]
 =\E[f(X)|(X+\widetilde B_u)_{u\geq1/t}].
\]
Observe that
\[
 \sigma(X+\widetilde B_u,u\geq1/t)=
 \sigma(X+\widetilde B_{1/t})\vee
 \sigma(\widetilde B_u-\widetilde B_{1/t},u\geq1/t)
\]
and that by Markov property of Brownian motion, the process $(\widetilde B_u-\widetilde B_{1/t})_{u\geq1/t}$ is independent
of $(X,\widetilde B_{1/t})$. Therefore,
\begin{equation}\label{eq:time-inversion}
 \E[f(X)|(X+\widetilde B_u)_{u\geq1/t}]
 =\E[f(X)|X+\widetilde B_{1/t}].
\end{equation}
Let $g$ be a measurable function, and assume that $\mu$ has a density $\rho$. As the heat semigroup $(P_t)$
is self-adjoint in $L^2(dx)$, we obtain
\begin{align*}
 \E[f(X)g(X+B_t)]&=\int_{\R^d}f(x)P_t(g)(x)\rho(x)\,dx\\
 &=\int_{\R^d}g(x)P_t(f\rho)(x)dx\\
 &=\E[Q_t f(X+B_t)g(X+B_t)]
\end{align*}
where $Q_t(f)=\frac{P_t(f\rho)}{P_t(\rho)}$. By definition of the conditional expectation, we obtain, for all $t>0$,
\begin{equation}\label{eq:heat-conditional}
 \E[f(X)|X+B_t]=Q_t f(X+B_t).
\end{equation}
Applying \eqref{eq:time-inversion} and \eqref{eq:heat-conditional} with the Brownian motion $(\widetilde B_{1/t})$ yields
\[
 \E[f(X)|\F_t]=\E[f(X)\mid X+\widetilde B_{1/t}]
 =Q_{1/t}f(X+\widetilde B_{1/t}).
\]
It remains to compute the latter term. By \eqref{eq:tilt}, denoting by $\rho$ the density of $\mu$, we have
\[
 \int_{\R^d}f\,d\mu_{t,\theta_t}
 =\frac{\int(f\rho)(x)e^{\langle x,\theta_t\rangle-t|x|^2/2}dx}
 {\int\rho(x)e^{\langle x,\theta_t\rangle-t|x|^2/2}dx}
 =\frac{P_{1/t}(f\rho)(\theta_t/t)}{P_{1/t}\rho(\theta_t/t)}
\]
which implies
\[
 Q_{1/t}f(X+\widetilde B_{1/t})=\int_{\R^d}f\,d\mu_{t,\theta_t}
\]
since $\theta_t/t=X+\widetilde B_{1/t}$.
\end{proof}
The next lemma collects two basic moment computations for the process $(\theta_t)$ and its conditional
expectation $a_t=\E[X|\theta_t]$, which will play a key role in the analysis below.
\begin{lemma}\label{lem:moments}
With the same notation as above, we have
\[
 \E\langle a_t,\theta_t\rangle=t\E|X|^2
\]
and
\[
 \E|\theta_t|^2=t^2\E|X|^2+td.
\]
\end{lemma}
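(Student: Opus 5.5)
Both identities follow from the explicit form $\theta_t=tX+B_t$ in \eqref{eq:observation}, the independence of $X$ and $(B_t)$, and the identification $a_t=\E[X\mid\F_t]$ given by Lemma \ref{lem:conditional}. The plan is to prove the second identity first, since it is a direct computation, and then derive the first one from it by the tower property.

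For the second identity, I expand
\[
 |\theta_t|^2=t^2|X|^2+2t\langle X,B_t\rangle+|B_t|^2 .
\]
Independence of $X$ and $B_t$, together with $\E B_t=0$, makes the cross term vanish in expectation. Since $\E|B_t|^2=td$, taking expectations gives $\E|\theta_t|^2=t^2\E|X|^2+td$.

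For the first identity, I apply Lemma \ref{lem:conditional} componentwise with $\phi(x)=x_i$. This gives $a_t=\E[X\mid\F_t]$. Because $\theta_t$ is $\F_t$-measurable, the tower property yields
\[
 \E\langle a_t,\theta_t\rangle=\E\bigl\langle \E[X\mid\F_t],\theta_t\bigr\rangle=\E\langle X,\theta_t\rangle .
\]
Substituting $\theta_t=tX+B_t$, the right-hand side equals $t\E|X|^2+\E\langle X,B_t\rangle=t\E|X|^2$, again by independence and $\E B_t=0$.

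The only real obstacle is technical. Lemma \ref{lem:conditional} is stated for bounded continuous $\phi$, while here we need $\phi(x)=x_i$. I would handle this by truncation: apply the lemma to $\phi_R(x)=\max(-R,\min(x_i,R))$ and let $R\to\infty$. Conditional dominated convergence applies because $X$ is integrable; indeed $\E|X|^2=d$ under isotropy, and the standing assumption that $e^{\langle x,\theta\rangle}$ is integrable gives all moments. Integrability of the product $\langle X,\theta_t\rangle$, needed to justify the tower property, follows from Cauchy--Schwarz together with the finiteness of $\E|X|^2$ and $\E|\theta_t|^2$ computed above.
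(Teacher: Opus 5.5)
Your proof is correct and follows the paper's own argument: the second identity comes from expanding $\theta_t=tX+B_t$ and using independence, and the first from $a_t=\E[X\mid\theta_t]$ and the tower property, which reduce it to $\E\langle X,\theta_t\rangle=t\E|X|^2$. Your truncation step, which extends Lemma~\ref{lem:conditional} from bounded continuous functions to the coordinate functions, is a rigor detail the paper leaves implicit.
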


\begin{proof}
As $a_t=\E[X|\theta_t]$, we have, for all $t\geq0$,
\[
 \E\langle a_t,\theta_t\rangle
 =\E\langle\theta_t,\E[X|\theta_t]\rangle
 =\E\langle X,\theta_t\rangle=t\E|X|^2
\]
by independence of $X$ and the Brownian motion $(B_t)$ and \eqref{eq:observation}. Moreover, \eqref{eq:observation} also yields
\[
 \E|\theta_t|^2=t^2\E|X|^2+td.\qedhere
\]
\end{proof}
The next lemma ensures that the process $(a_t)$ converges in law to $\mu$ as $t\to\infty$ :
\begin{lemma}\label{lem:limit}
With the same notation as above,
\[
 \E\,|a_t-X|^2\leq\frac{d}{t}.
\]
In particular, $a_t\xrightarrow[t\to\infty]{}X$ in $L^2$, and $a_t\xrightarrow[t\to\infty]{}\mu$ in law.
\end{lemma}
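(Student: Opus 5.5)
Since $a_t=\E[X\mid\F_t]$ by Lemma \ref{lem:conditional} (applied coordinatewise), $a_t$ is the $L^2$-optimal $\F_t$-measurable estimator of $X$. So I will bound $\E|a_t-X|^2$ from above by the error of any explicit $\F_t$-measurable competitor. The natural choice is $\theta_t/t=X+B_t/t$. Because $a_t$ is the orthogonal projection of $X$ onto $L^2(\F_t)$, we get
\[
 \E|X-a_t|^2\leq \E\left|X-\frac{\theta_t}{t}\right|^2=\frac{\E|B_t|^2}{t^2}=\frac{td}{t^2}=\frac{d}{t}.
\]
This requires $X\in L^2$, which holds since $\mu$ has finite second moments (e.g.\ $\E|X|^2=d$ in the isotropic case). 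The Pythagorean identity $\E|X-Y|^2=\E|X-a_t|^2+\E|a_t-Y|^2$ for $\F_t$-measurable $Y\in L^2$ then gives the inequality.

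Lemma \ref{lem:conditional} is stated for bounded continuous $\phi$, so identifying $a_t$ with $\E[X\mid\F_t]$ needs a truncation step. I would apply the lemma to $\phi_R(x)=\max(-R,\min(x_i,R))$ and let $R\to\infty$. Both sides then converge: the conditional expectation converges in $L^1$, and $\int \phi_R\,d\mu_{t,\theta_t}\to a_t^i$ by dominated convergence, using the Gaussian factor $e^{-t|x|^2/2}$ in the tilt. Alternatively, one can compute directly with the quantities in Lemma \ref{lem:moments}: $\E|X-a_t|^2=\E|X|^2-\E|a_t|^2$ and $\E|X-\theta_t/t|^2 = d/t$. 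This is the only mildly technical point.

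For the consequences, $L^2$ convergence as $t\to\infty$ is immediate from the bound $d/t\to0$. $L^2$ convergence implies convergence in probability, hence in law, so the law of $a_t$ converges to the law of $X$, which is $\mu$. I do not expect any real obstacle beyond the integrability justification above.
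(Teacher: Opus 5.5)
Your proof is correct and follows the paper's route: you identify $a_t=\E[X\mid\F_t]$ via Lemma \ref{lem:conditional}, then compare with the $\F_t$-measurable competitor $\theta_t/t$ using the orthogonal-projection property, which gives $\E|B_t|^2/t^2=d/t$. Your truncation with $\phi_R$ (dominated convergence under the Gaussian factor $e^{-t|x|^2/2}$, plus $L^1$-continuity of conditional expectation) correctly fills a step the paper passes over silently. Note, though, that your ``alternative'' identity $\E|X-a_t|^2=\E|X|^2-\E|a_t|^2$ already assumes $a_t=\E[X\mid\F_t]$, so it does not avoid that step.
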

\begin{proof}
By Lemma \ref{lem:conditional}, we have the representation
\[
 a_t=\E[X|\F_t].
\]
Moreover, the process $(\theta_t/t)$ is $(\F_t)$-adapted and has finite second moments. By the characterization of conditional expectation as an orthogonal projection onto $L^2(\F_t)$,
\[
 \E\,|a_t-X|^2
 \leq\E\left|\frac{\theta_t}{t}-X\right|^2
 =\E\left|\frac{B_t}{t}\right|^2=\frac{d}{t}
\]
as desired.
\end{proof}
It is possible to relate the process $\E A_t$ to the random vector $X$ with law $\mu$. Indeed, we have
\[
 A_t=\E[X\otimes X|\theta_t]-\E[X|\theta_t]\otimes\E[X|\theta_t]
 =\E[X\otimes X|\theta_t]-a_t\otimes a_t.
\]
Taking expectations, we get
\begin{equation}\label{eq:mean-covariance}
 \E A_t=\E[X\otimes X]-\E[a_t\otimes a_t].
\end{equation}

\section{A variant of the De Bruijn formula}\label{sec:debruijn}
The Shannon entropy of $\mu$ is defined as
\[
 S(\mu)=-\E\ln(\phi(X))
\]
where $\phi=\frac{d\mu}{dx}$.The relative entropy is related to the differential entropy through the identity
through the identity
\begin{equation}\label{eq:shannon-relative}
 S(\gamma)-S(\mu)+\frac12(\E|X|^2-d)=H(\mu|\gamma).
\end{equation}
It is well-known that the derivative of the entropy along the heat semigroup, which does appear
in the Fisher information, defined as
\[
 I(\mu)=\int_{\R^d}|\nabla\log f|^2f\,dx,
\]
where $f$ is the density of $\mu$ with respect to Lebesgue measure. More precisely, in \cite{Car91}, it is
shown that if $\mu$ has a finite second moment and finite entropy, De Bruijn relation holds
\begin{equation}\label{eq:debruijn}
 S(\gamma)-S(\mu)=\frac12\int_0^\infty
 \left[I(\mu P_s)-\frac{d}{1+s}\right]ds
\end{equation}
where $(P_t)$ is the heat semigroup. We reformulate \eqref{eq:debruijn} within the framework of the stochastic
localization process associated with $\mu$. We use the notation $(a_t)$, $(A_t)$ and $(\theta_t)$ from the previous
section.

\begin{lemma}\label{lem:entropy-barycenter}
Let $\mu$ be a probability measure on $\R^d$ with finite second moments and finite entropy.
Then,
\begin{equation}\label{eq:entropy-barycenter}
 H(\mu|\gamma)=\frac12\int_0^\infty
 \E\left|a_t-\frac{\theta_t}{1+t}\right|^2dt
\end{equation}
where $\lvert\cdot\rvert$ denotes the Euclidean norm.
\end{lemma}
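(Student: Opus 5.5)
The plan is to derive \eqref{eq:entropy-barycenter} from the classical De Bruijn relation \eqref{eq:debruijn} via the time change $s=1/t$, together with the identity \eqref{eq:shannon-relative}. The link between the two pictures is Lemma \ref{lem:conditional}: by time inversion, $Y_t:=\theta_t/t=X+\widetilde B_{1/t}$ has law $\mu P_{1/t}=\mu*\gamma_{1/t}$, and $a_t=\E[X\mid Y_t]$.

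\textbf{Step 1 (Tweedie's formula).} First I identify the score of $\mu P_s$. Write $\rho_s$ for the density of $\mu P_s$. Differentiating
\[
\rho_s(y)=\int (2\pi s)^{-d/2}e^{-|y-x|^2/2s}\rho(x)\,dx
\]
under the integral sign gives
\[
\nabla\log\rho_s(y)=\frac{\E[X\mid X+W_s=y]-y}{s},
\]
where $W_s$ is an independent Gaussian vector with covariance $s\Id$. This is the same Gaussian-kernel computation as in \eqref{eq:heat-conditional}. With $s=1/t$ it yields
\[
I(\mu P_{1/t})=t^2\,\E\left|a_t-\frac{\theta_t}{t}\right|^2 .
\]

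\textbf{Step 2 (change of variables).} Substituting $s=1/t$, so that $ds=dt/t^2$, in \eqref{eq:debruijn} gives
\[
S(\gamma)-S(\mu)=\frac12\int_0^\infty\Big[\E\left|a_t-\tfrac{\theta_t}{t}\right|^2-\frac{d}{t(1+t)}\Big]dt .
\]
Here I used $\frac{d}{1+1/t}\cdot\frac{1}{t^2}=\frac{d}{t(1+t)}$.

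\textbf{Step 3 (pointwise algebra in $t$).} Set $m=\E|X|^2$. Write
\[
a_t-\frac{\theta_t}{1+t}=\Big(a_t-\frac{\theta_t}{t}\Big)+\frac{\theta_t}{t(1+t)} .
\]
Expand the square and use Lemma \ref{lem:moments}. It gives
\[
\E\langle a_t,\theta_t\rangle-\tfrac1t\E|\theta_t|^2=tm-(tm+d)=-d
\qquad\text{and}\qquad
\E|\theta_t|^2=t^2m+td .
\]
Therefore
\[
\E\left|a_t-\frac{\theta_t}{1+t}\right|^2=\E\left|a_t-\frac{\theta_t}{t}\right|^2-\frac{2d}{t(1+t)}+\frac{m}{(1+t)^2}+\frac{d}{t(1+t)^2}.
\]
A short simplification rewrites the last three terms as
\[
-\frac{d}{t(1+t)}+\frac{m-d}{(1+t)^2}.
\]

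\textbf{Step 4 (integration).} Integrate the identity of Step 3 over $t\in(0,\infty)$ and use $\int_0^\infty(1+t)^{-2}dt=1$. The right-hand side becomes
\[
2\big(S(\gamma)-S(\mu)\big)+(m-d),
\]
which equals $2H(\mu|\gamma)$ by \eqref{eq:shannon-relative}.

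\textbf{Main obstacle.} The delicate point is justification rather than algebra. The integral in \eqref{eq:debruijn} must be understood as the integral of the combined integrand $I(\mu P_s)-d/(1+s)$, not as a difference of two integrals: the individual pieces $\int \E|a_t-\theta_t/t|^2\,dt$ and $\int d/(t(1+t))\,dt$ both diverge at $t=0$. I will therefore keep the combination intact and apply the pointwise identity of Step 3 inside the integral, adding only the harmless term $(m-d)/(1+t)^2$, which is integrable. Note that $\E|a_t-\theta_t/(1+t)|^2\ge0$, so the resulting integral is unambiguous.

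I also need the differentiation under the integral in Step 1 to be legitimate. For $s>0$ this is standard, since $\mu$ has finite second moment and the Gaussian kernel is smooth with integrable derivatives. The hypotheses of finite second moment and finite entropy are exactly those required for \eqref{eq:debruijn} from \cite{Car91}.
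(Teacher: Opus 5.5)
Your proposal is correct and follows the paper's proof essentially step for step. The steps are: the score identity $I(\mu P_{1/t})=t^2\E|a_t-\theta_t/t|^2$ (which the paper gets by differentiating $\log P_s\rho$ written through $\Lambda(1/s,x/s)$ rather than by quoting Tweedie); the time change $s=1/t$ in De Bruijn's formula; and the pointwise identity $\E|a_t-\frac{\theta_t}{1+t}|^2=\bigl(\E|a_t-\frac{\theta_t}{t}|^2-\frac{d}{t(1+t)}\bigr)+\frac{\E|X|^2-d}{(1+t)^2}$ from Lemma \ref{lem:moments}, integrated and combined with \eqref{eq:shannon-relative}. Your remark about keeping the combined integrand intact near $t=0$ is a useful clarification that the paper leaves implicit.
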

\begin{proof}
Let $(P_s)$ denote the heat semigroup, $\rho$ the density of $\mu$ and $\rho_s$ the density of $\mu P_s=\mu*\gamma_s$.
Let $X$ be a random vector with distribution $\mu$ and $(B_s)_{s\geq0}$ be a Brownian motion independent of
$X$. Then, the random vector $X_s=X+B_s$ has distribution $\mu*\gamma_s$. For all $s>0$,
\[
 P_s\rho(x)=(2\pi s)^{-d/2}\int_{\R^d}\rho(y)\cdot
 \exp\left(-\frac{|x-y|^2}{2s}\right)dy.
\]
Recall the definition \eqref{eq:laplace} of $\Lambda$ and notice that
\[
 \log P_s\rho(x)=-\frac d2\log(2\pi s)-\frac{|x|^2}{2s}
 +\Lambda\left(\frac1s,\frac xs\right).
\]
Therefore\vspace{\baselineskip}
\begin{equation}\label{eq:fisher-localization}
 I(\mu P_s)=\E|\nabla\log P_s\rho(X_s)|^2
 =\E\left|\frac1s\nabla\Lambda\left(\frac1s,\frac{X_s}{s}\right)
 -\frac{X_s}{s}\right|^2.
\end{equation}
The process $s^{-1}X_s=s^{-1}X+s^{-1}B_s$ has the same law as $tX+B_t=\theta_t$ with $t=s^{-1}$. Recall that
$a_t=a(t,\theta_t)=\nabla\Lambda(t,\theta_t)$. Then by \eqref{eq:debruijn}, \eqref{eq:fisher-localization} and making the change of variables $t=s^{-1}$, we get
\begin{align}
 S(\gamma)-S(\mu)
 &=\frac12\int_0^\infty\left[
 \E\left|s^{-1}a(s^{-1},s^{-1}X_s)-s^{-1}X_s\right|^2
 -\frac{d}{1+s}\right]ds\notag\\
 &=\frac12\int_0^\infty\left[\E\,|ta_t-\theta_t|^2
 -\frac{d}{1+t^{-1}}\right]\frac{dt}{t^2}\notag\\
 &=\frac12\int_0^\infty\left[
 \E\left|a_t-\frac{\theta_t}{t}\right|^2
 -\frac{d}{t(1+t)}\right]dt.\label{eq:debruijn-time-change}
\end{align}
We observe that, by Lemma \ref{lem:moments},
\begin{equation}\label{eq:moment-identity}
 \E\left|a_t-\frac{\theta_t}{t}\right|^2-\frac{d}{t(1+t)}
 =\E|a_t|^2-\E|X|^2+\frac{d}{1+t}
\end{equation}
and, by Lemma \ref{lem:moments} and \eqref{eq:moment-identity},
\begin{align*}
 \E\left|a_t-\frac{\theta_t}{1+t}\right|^2
 &=\E|a_t|^2+
 \left(\frac{1}{(1+t)^2}-1\right)\E|X|^2
 +\frac{t}{(1+t)^2}d\\
 &=\left(\E\left|a_t-\frac{\theta_t}{t}\right|^2
 -\frac{d}{t(1+t)}\right)
 +\frac{\E|X|^2-d}{(1+t)^2}.
\end{align*}
Integrating this relation over $[0,\infty)$ and using \eqref{eq:debruijn-time-change} yields
\[
 S(\gamma)-S(\mu)=\frac12\int_0^\infty
 \E\left|a_t-\frac{\theta_t}{1+t}\right|^2dt+\frac12(d-\E|X|^2)
\]
which is the result by \eqref{eq:shannon-relative}.
\end{proof}

\section{A variational formula for the entropy}\label{sec:variational}
In this section, we present a theorem originating from the work of \cite{Eld20}, formulated here with
a slightly different time parametrization adapted to our setting. Its proof relies on a variational
representation of the relative entropy established by \cite{Leh13} using stochastic calculus techniques.
\begin{proposition}\label{prop:variational}
Let $\mu$ be a probability measure. Let $(B_t)$ be a Brownian motion and let $(\Gamma_t)$ be an
adapted process taking values in the space of symmetric matrices. Let $(m_t)_{t\geq0}$ be a deterministic
path with value in the space of symmetric matrices and satisfying
\[
 \int_0^\infty m_t^2\,dt=\Id.
\]
We assume that
\[
 \int_0^\infty\Gamma_t\,dB_t\laweq\mu.
\]
Then
\[
 H(\mu|\gamma)\leq\frac12\E\int_0^\infty\Tr\left[
 \left(\int_t^\infty m_s^2\,ds\right)^{-1}(m_t-\Gamma_t)^2\right]dt.
\]
\end{proposition}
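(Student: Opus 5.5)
We will derive the bound from Lehec's variational (Föllmer drift) representation of relative entropy. Recall it: for any adapted drift $(u_t)_{t\le1}$ with
\[
 B_1+\int_0^1u_s\,ds\laweq\mu,
\]
one has $H(\mu|\gamma)\le\frac12\E\int_0^1|u_s|^2ds$. The plan has three steps: a time change, a change of Brownian motion, and an application of this representation with a Girsanov drift.

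\textbf{Time change.} Set $M(t)=\int_0^t m_s^2\,ds$, so that $M(\infty)=\Id$. We first treat the scalar-like case where the $m_t$ commute, for instance $m_t=\mu(t)\Id$; a general deterministic family should reduce to a matrix-valued time change. Define $\tau(t)=\int_0^t\Tr(m_s^2)/d\,ds$, or in the scalar case simply $\int_0^t m_s^2ds\in[0,1]$. The Gaussian martingale
\[
 W_t=\int_0^t m_s\,dB_s
\]
has quadratic variation $M(t)$. Hence $\int_0^\infty m_s\,dB_s\sim\gamma$, and $W$ is a time-changed Brownian motion.

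\textbf{Construction of the drift.} Define a new Brownian motion $\widetilde B$ by $d\widetilde B_t=dB_t$. We write the target variable as
\[
 \int_0^\infty\Gamma_t\,dB_t=\int_0^\infty m_t\,dB_t+\int_0^\infty(\Gamma_t-m_t)\,dB_t .
\]
The first term is Gaussian with law $\gamma$. The second term is a martingale term, not a drift, so Lehec's formula does not apply directly. Following Eldan--Mikulincer--Zhai, we instead represent the law of the whole integral via a Föllmer-type drift. Let $G_t=\int_0^t\Gamma_s\,dB_s$. For $s>t$, replace the future increments of $G$ by the Gaussian increments $\int_t^\infty m_s\,dB_s$, whose covariance is $\Sigma_t:=\int_t^\infty m_s^2ds$. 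The density process
\[
 D_t=\E\Big[\frac{d\mu}{d\gamma}\Big(\int_0^\infty\Gamma\,dB\Big)\,\Big|\,\F_t\Big]
\]
is awkward, so we use convexity instead. Consider the process $Y_t=G_t+\int_t^\infty m_s\,dB'_s$, where $B'$ is an independent Brownian motion. Then $Y_0\sim\gamma$, $Y_\infty\sim\mu$, and
\[
 dY_t=(\Gamma_t-m_t)\,dB_t+\text{(martingale in }B').
\]
More cleanly, we express $\mu$ as the law of $\int m\,dB$ under a Girsanov tilt with drift $v_t$. Writing
\[
 (\Gamma_t-m_t)\,dB_t=m_t\big(m_t^{-1}(\Gamma_t-m_t)\big)dB_t ,
\]
the heuristic identification is
\[
 v_t=\Sigma_t^{-1/2}(\Gamma_t-m_t)\,\xi ,
\]
with $\E|v_t|^2=\Tr[\Sigma_t^{-1}(m_t-\Gamma_t)^2]$. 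This is exactly the integrand in the claimed bound, with the factor $\frac12$ coming from Girsanov.

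\textbf{Entropy comparison.} We then compare the law of $Y_\infty$ with the law of $Y_0=\gamma$ by estimating $\frac{d}{dt}H(\mathrm{Law}(Y_t)|\gamma)$, or equivalently by using the chain rule for relative entropy along the filtration. Conditionally on $\F_t$, the remaining Gaussian part has covariance $\Sigma_t$. An infinitesimal change of the diffusion coefficient from $m_t$ to $\Gamma_t$ costs, by the entropy of a Gaussian with perturbed covariance (second-order expansion of $\frac12\big(\Tr(\Sigma^{-1}S)-d-\log\det(\Sigma^{-1}S)\big)$), at most $\frac12\Tr[\Sigma_t^{-1}(\Gamma_t-m_t)^2]\,dt$. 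Summing over $t$ via the chain rule and data processing gives the stated inequality.

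\textbf{Main obstacle.} The hard step is making this Gaussian-perturbation bound rigorous: one must control the entropy increment by the squared \emph{difference} $(m_t-\Gamma_t)^2$ rather than by a log-det expression. I would first try to realize it through Lehec's representation, building an explicit drift via a Gaussian coupling of $dB$ with the remainder, as in Eldan--Mikulincer--Zhai, and verifying adaptedness. I expect to need an approximation argument when $\Sigma_t$ degenerates as $t\to\infty$.
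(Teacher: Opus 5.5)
Your proposal has a genuine gap: the drift is never actually constructed, and the argument you offer in its place does not work.

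\textbf{The heuristic drift.} The identification $v_t=\Sigma_t^{-1/2}(\Gamma_t-m_t)\xi$, with $\xi$ unspecified, only produces a random vector whose second moment matches the integrand. Nothing shows that it is adapted, or that adding it as a drift yields the law $\mu$.

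\textbf{The entropy comparison.} This step uses the wrong mechanism. Suppose the covariance of the remaining Gaussian part is perturbed from $\Sigma_t$ to $\Sigma_t+(\Gamma_t^2-m_t^2)\,dt$. Then the second-order expansion of $\frac12(\Tr(\Sigma^{-1}S)-d-\log\det(\Sigma^{-1}S))$ is of order $dt^2$, and it involves $\Gamma_t^2-m_t^2$ rather than $(\Gamma_t-m_t)^2$. Summing such costs over $t$ gives nothing. Moreover, the conditional laws are not Gaussian once $\Gamma$ is random. The chain rule for relative entropy concerns joint (path) laws, and these are mutually singular when the diffusion coefficients differ. So ``summing over $t$ via the chain rule and data processing'' is not available.

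\textbf{The process $Y_t$ and the time change.} Using an independent $B'$ destroys the coupling that produces $(\Gamma_t-m_t)^2$: in fact $dY_t=\Gamma_t\,dB_t-m_t\,dB'_t$, not $(\Gamma_t-m_t)\,dB_t+\dots$. Finally, the time-change step is left unresolved for non-commuting $m_t$.

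\textbf{The missing idea.} The per-step cost is a \emph{mean-shift} cost, not a covariance-perturbation cost. The martingale increment $\delta=(\Gamma_t-m_t)\,dB_t$ should be absorbed as a shift of the future Gaussian part $\int_t^\infty m_s\,dB_s$. That part is driven by the same $B$ and has covariance $\Sigma_t=\int_t^\infty m_s^2\,ds$. The shift costs $\frac12\delta^T\Sigma_t^{-1}\delta$, which equals $\frac12\Tr[\Sigma_t^{-1}(\Gamma_t-m_t)^2]\,dt$ in conditional expectation given the past.

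\textbf{How the paper implements it.} It uses the explicit adapted drift $u_t=m_t\int_0^t H_s\,dB_s$ with $H_s=\Sigma_s^{-1}(\Gamma_s-m_s)$. Stochastic Fubini gives $\int_0^\infty m_t(dB_t+u_t\,dt)=\int_0^\infty(m_s+\Sigma_sH_s)\,dB_s=\int_0^\infty\Gamma_s\,dB_s$. Take $d\Q=\exp(-\int_0^\infty u_s\,dB_s-\frac12\int_0^\infty|u_s|^2ds)\,d\Prob$. By Girsanov, $B_t+\int_0^tu_s\,ds$ is a $\Q$-Brownian motion, so this variable is $\gamma$-distributed under $\Q$ because $\int_0^\infty m_t^2\,dt=\Id$. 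Data processing then gives $H(\mu|\gamma)\le H(\Prob|\Q)=\frac12\E\int_0^\infty|u_t|^2dt$. Itô's isometry and Fubini convert this into $\frac12\int_0^\infty\E\Tr[\Sigma_s^{-1}(\Gamma_s-m_s)^2]\,ds$.

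Because Lehec's bound is applied directly in the form $\int m_t(dB_t+u_t\,dt)$, no time change is needed. Matrix-valued, non-commuting $m_t$ therefore pose no extra difficulty.
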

\begin{proof}
We first prove that if $(u_t)$ is an adapted process with values in $\R^d$ such that
\begin{equation}\label{eq:drift-embedding}
 \int_0^\infty m_t(dB_t+u_t\,dt)\laweq\mu
\end{equation}
then,
\begin{equation}\label{eq:drift-cost}
 H(\mu|\gamma)\leq\frac12\E\left[\int_0^\infty|u_t|^2dt\right].
\end{equation}
To see this, let $X=\int_0^\infty m_t(dB_t+u_t\,dt)$. By \eqref{eq:drift-embedding}, the random vector $X$ has law $\mu$. Next, assume
first that $\int_0^\infty|u_s|^2ds\in L^\infty$ and consider the change of measure
\[
 d\Q=\exp\left(-\int_0^\infty u_s\,dB_s-\frac12\int_0^\infty|u_s|^2ds\right)d\Prob.
\]
By Novikov's criterion, the above stochastic exponential defines a martingale and we can apply
Girsanov's theorem. Under $\Q$, the process $B_t+\int_0^t u_s\,ds$ becomes a Brownian motion. Since
$\int m_t^2dt=1$, under $\Q$, $X$ is distribued according to $\gamma$. Therefore, applying the data processing
inequality yields
\[
 H(\mu|\gamma)=H(X_\#\Prob|X_\#\Q)\leq H(\Prob|\Q).
\]
By the definition of the relative entropy and using the fact that $\int_0^\infty u_t\cdot dB_t$ has mean 0, it follows
that
\[
 H(\Prob|\Q)=\E\ln\frac{d\Prob}{d\Q}
 =\frac12\E\int_0^\infty|u_t|^2dt.
\]
For the general case, when $\int_0^\infty|u_t|^2dt$ is not necessarily bounded, follows from the lower semicontinuity of the map $\mu\to H(\mu|\nu)$. See the proof of Proposition 1 in \cite{Leh13} for details.

\medskip
We are now in a position to prove the proposition. Let $(\Gamma_s)$ be a symmetric matrix process such
that $\int_0^\infty\Gamma_s\,dB_s\sim\mu$. Let $X=\int_0^\infty\Gamma_s\,dB_s$. We need to write
\begin{equation}\label{eq:desired-drift}
 X=\int_0^\infty m_t(dB_t+u_t\,dt),
\end{equation}
for some adapted process $(u_t)$. We take $u$ of the form
\begin{equation}\label{eq:drift-form}
 u_t=m_t\int_0^t H_s\,dB_s
\end{equation}
where $(H_t)$ is an adapted matrix-valued process. Using Fubini's theorem,
\begin{align*}
 \int_0^\infty m_t(dB_t+u_t\,dt)
 &=\int_0^\infty m_t(dB_t+m_t\left(\int_0^t H_s\,dB_s\right)dt)\\
 &=\int_0^\infty m_t\,dB_t+
 \int_0^\infty\left(\int_s^\infty m_t^2\,dt\right)H_s\,dB_s\\
 &=\int_0^\infty\left(m_s+\left(\int_s^\infty m_t^2\,dt\right)H_s\right)dB_s.
\end{align*}
Identifying the integrands, we obtain, for all $s>0$,
\[
 m_s+\left(\int_s^\infty m_t^2\,du\right)H_s=\Gamma_s.
\]
Therefore
\begin{equation}\label{eq:H-process}
 H_t=\left(\int_t^\infty m_s^2\,ds\right)^{-1}(\Gamma_t-m_t).
\end{equation}
If $(u_t)$ is defined by \eqref{eq:drift-form} and \eqref{eq:H-process}, then
\[
 X=\int_0^\infty\Gamma_t\,dB_t=\int_0^\infty m_t(dB_t+u_t\,dt).
\]
On the other hand, by Itô's isometry,
\begin{equation}\label{eq:ito-drift}
 \E|u_t|^2=\Tr\left(m_t^2\int_0^t\E[H_sH_s^T]ds\right).
\end{equation}
Thus, by \eqref{eq:drift-cost} and \eqref{eq:ito-drift},
\begin{align*}
 H(\mu|\gamma)&\leq\frac12\int_0^\infty\E|u_t|^2dt\\
 &=\frac12\int_0^\infty\Tr\left(m_t^2\int_0^t\E[H_sH_s^T]ds\right)dt\\
 &=\frac12\Tr\left(\int_0^\infty\int_s^\infty m_t^2\E[H_sH_s^T]dsdt\right)\\
 &=\frac12\Tr\left(\int_0^\infty\E\left[
 \left(\int_s^\infty m_u^2\,du\right)H_sH_s^T\right]ds\right)\\
 &=\frac12\int_0^\infty\Tr\,\E\left[
 \left(\int_s^\infty m_u^2\,du\right)^{-1}(\Gamma_s-m_s)^2\right]ds.\qedhere
\end{align*}
\end{proof}

Let us consider the special case $m_t=\frac1{1+t}\Id$ and establish a connection with the stochastic localization. Again, we use the definitions of the stochastic localization : $(a_t)$, $(A_t)$, $(\theta_t)$, $(\mu_{t,\theta_t})$ defined
in the previous sections and $(\F_t)$ the natural filtration generated by $(\theta_t)$.
\begin{lemma}\label{lem:innovation}
There exists a $(\F_t)$ Brownian motion $(W_t)$ such that
\begin{equation}\label{eq:innovation}
 \theta_t=W_t+\int_0^t a(s,\theta_s)ds.
\end{equation}
Moreover, we have
\[
 a_t=a_0+\int_0^t A_s\,dW_s.
\]
\end{lemma}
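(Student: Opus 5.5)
I would define the candidate Brownian motion directly by
\[
 W_t=\theta_t-\int_0^t a(s,\theta_s)\,ds ,
\]
which is $(\F_t)$-adapted since $a_s=a(s,\theta_s)$ is a function of $\theta_s$. The first claim then reduces to showing that $(W_t)$ is an $(\F_t)$ Brownian motion. The second claim should follow by applying Itô's formula to $a_t=\nabla\Lambda(t,\theta_t)$.

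\textbf{Step 1: $(W_t)$ is an $(\F_t)$ Brownian motion.} Using \eqref{eq:observation},
\[
 W_t-W_s=B_t-B_s+\int_s^t (X-a_r)\,dr .
\]
By Lévy's characterization it suffices to check that $(W_t)$ is a continuous $(\F_t)$-martingale with quadratic variation $t\Id$. The quadratic variation equals that of $B$, because the remaining term has finite variation, and continuity is clear.

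For the martingale property, take $s<t$ and condition on $\F_s$. The increment $B_t-B_s$ is independent of $\sigma(X,(B_r)_{r\leq s})\supset\F_s$, so $\E[B_t-B_s|\F_s]=0$. For the drift term, Fubini and the tower property give
\[
 \E[X-a_r|\F_s]=\E\bigl[\E[X|\F_r]-a_r\,\big|\,\F_s\bigr]=0 \qquad (r\geq s),
\]
since $a_r=\E[X|\F_r]$ by Lemma~\ref{lem:conditional}. Integrability holds because $\E|X|^2<\infty$, which justifies the Fubini step.

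\textbf{Step 2: the representation of $a_t$.} The process $\theta_t$ is a semimartingale with $d\theta_t=dW_t+a_t\,dt$ and $d\langle\theta\rangle_t=\Id\,dt$. Applying Itô's formula to the smooth function $\nabla\Lambda$ gives
\[
 da_t=\partial_t\nabla\Lambda\,dt+\nabla^2\Lambda\,(dW_t+a_t\,dt)+\tfrac12\nabla\Delta\Lambda\,dt .
\]
The diffusion part is $A_t\,dW_t$, as required. It remains to show that the drift
\[
 \partial_t\nabla\Lambda+\nabla^2\Lambda\,\nabla\Lambda+\tfrac12\nabla\Delta\Lambda
\]
vanishes identically.

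To see this, differentiate \eqref{eq:laplace} with $Z=e^{\Lambda}$. One computes $\partial_t Z=-\tfrac12\int|x|^2e^{\langle x,\theta\rangle-t|x|^2/2}d\mu=-\tfrac12\Delta_\theta Z$. Hence
\[
 \partial_t\Lambda=-\tfrac12\bigl(\Delta\Lambda+|\nabla\Lambda|^2\bigr).
\]
Taking the $\theta$-gradient of this identity gives
\[
 \partial_t\nabla\Lambda=-\tfrac12\nabla\Delta\Lambda-\nabla^2\Lambda\,\nabla\Lambda,
\]
which cancels the drift exactly. Therefore $a_t=a_0+\int_0^tA_s\,dW_s$.

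\textbf{Main obstacle.} The delicate points are technical rather than structural. First, one must justify Itô's formula at $t=0$ and the smoothness of $\Lambda$ on $[0,\infty)\times\R^d$; the paper takes this as an assumption via integrability of $e^{\langle x,\theta\rangle}$. Second, one must check that the stochastic integral is a true martingale rather than only a local one, which requires controlling $\E\int_0^t|A_s|^2ds$.

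Alternatively, the martingale property of $a_t$ (Lemma~\ref{lem:conditional}) already forces the drift to vanish once Itô's formula is available. This gives a second check of the computation above.
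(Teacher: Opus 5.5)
Your argument is correct. The paper does not prove this lemma; it only refers to \cite{Kla21} and \cite{Kla25}, and your route is essentially the standard one found there. First, $W_t=\theta_t-\int_0^t a_s\,ds$ is an $(\F_t)$ Brownian motion by Lévy's characterization together with $a_t=\E[X\mid\F_t]$. Second, Itô's formula applied to $\nabla\Lambda(t,\theta_t)$ has zero drift because of the backward heat equation $\partial_t\Lambda=-\frac12(\Delta\Lambda+|\nabla\Lambda|^2)$. Equivalently, one can derive $d\mu_t(dx)=\langle x-a_t,dW_t\rangle\mu_t(dx)$ and integrate against $x$.

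Your second ``obstacle'' is not actually one. Itô's formula already gives the identity with the local stochastic integral. Since $a_t-a_0$ is square integrable, that integral is automatically an $L^2$ martingale, with $\E\int_0^t\Tr(A_s^2)\,ds=\E|a_t-a_0|^2$.
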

See e.g.\ \cite{Kla21} or \cite{Kla25} for a proof of this result. The process $(\mu_t)=(\mu_{t,\theta_t})$ is called the
stochastic localization. It follows from the previous results that the relative entropy admits a
representation in terms of the stochastic localization process :

\begin{proposition}\label{prop:entropy-covariance}
Let $\mu$ be a probability measure on $\R^n$ with finite second moments, centered and
finite entropy. Then,
\begin{equation}\label{eq:covariance-embedding}
 \int_0^\infty A_t\,dW_t\laweq\mu
\end{equation}
and on the other hand
\begin{equation}\label{eq:entropy-covariance}
 H(\mu|\gamma)=\frac12\int_0^\infty(1+t)\,
 \E\Tr\left[\left(\frac{\Id}{1+t}-A_t\right)^2\right]dt.
\end{equation}
In other words, when $m_t=\frac1{1+t}\Id$, the process $(A_t)$ achieves equality in Proposition $\ref{prop:variational}$.
\end{proposition}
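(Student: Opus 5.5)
The plan has two parts: first the embedding \eqref{eq:covariance-embedding}, then the entropy identity \eqref{eq:entropy-covariance}.

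\textbf{The embedding.} By Lemma \ref{lem:innovation}, $a_t=a_0+\int_0^t A_s\,dW_s$. At time $0$, $\mu_{0,\theta_0}=\mu$, so $a_0$ is the barycenter of $\mu$, which vanishes since $\mu$ is centered. Hence $a_t=\int_0^tA_s\,dW_s$. By Lemma \ref{lem:limit}, $a_t\to X$ in $L^2$ as $t\to\infty$. It follows that the martingale $\int_0^t A_s\,dW_s$ converges in $L^2$, with limit $\int_0^\infty A_s\,dW_s=X$ almost surely. In particular $\int_0^\infty A_t\,dW_t\laweq\mu$. Moreover $(A_t)$ is adapted, symmetric, and satisfies $\int_0^\infty\E\Tr A_t^2\,dt=\E|X|^2<\infty$, so Proposition \ref{prop:variational} applies with $\Gamma_t=A_t$, $B=W$ and $m_t=\Id/(1+t)$. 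For this choice $\int_t^\infty m_s^2\,ds=\Id/(1+t)$, so the right-hand side of the proposition is exactly the right-hand side of \eqref{eq:entropy-covariance}. This already gives ``$\le$''.

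\textbf{Equality.} I will compute both sides through Lemma \ref{lem:entropy-barycenter}, reducing \eqref{eq:entropy-covariance} to
\[
\int_0^\infty \E\Big|a_t-\tfrac{\theta_t}{1+t}\Big|^2dt=\int_0^\infty(1+t)\,\E\Tr\Big[\Big(\tfrac{\Id}{1+t}-A_t\Big)^2\Big]dt .
\]
Set $Y_t=a_t-\theta_t/(1+t)$. Using Lemma \ref{lem:innovation}, $d\theta_t=dW_t+a_t\,dt$, and $d a_t=A_t\,dW_t$, Itô's formula gives
\[
dY_t=\Big(A_t-\tfrac{\Id}{1+t}\Big)dW_t-\tfrac{1}{1+t}\Big(a_t-\tfrac{\theta_t}{1+t}\Big)dt
=\Big(A_t-\tfrac{\Id}{1+t}\Big)dW_t-\tfrac{Y_t}{1+t}\,dt,
\]
with $Y_0=0$. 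Then $d\big((1+t)Y_t\big)=(1+t)Y_t'\!+Y_t\,dt$ cancels the drift, so $(1+t)Y_t=\int_0^t(1+s)\big(A_s-\tfrac{\Id}{1+s}\big)dW_s$. Itô's isometry gives
\[
\E|Y_t|^2=\frac{1}{(1+t)^2}\int_0^t(1+s)^2\,\E\Tr\Big[\Big(A_s-\tfrac{\Id}{1+s}\Big)^2\Big]ds .
\]
Integrating in $t$ and using Fubini together with $\int_s^\infty(1+t)^{-2}dt=(1+s)^{-1}$ yields exactly the right-hand side above. This is the same Fubini manipulation as at the end of the proof of Proposition \ref{prop:variational}; indeed $-Y_t$ plays the role of $u_t$ there.

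\textbf{Main obstacle.} The hard part is justifying the stochastic calculus: that the stochastic integrals are true martingales, i.e.\ that $\E\int_0^t(1+s)^2\Tr(A_s-\Id/(1+s))^2ds<\infty$ for finite $t$, and that Fubini applies to nonnegative integrands. Since $\E\Tr A_s^2\le\E\Tr A_s\cdot\|A_s\|$ could be delicate, I would first show $\E\Tr A_s^2$ is locally integrable from $\int_0^\infty \E\Tr A_s^2\,ds=\E|X|^2$, which follows from Itô's isometry applied to $a_t$ and \eqref{eq:mean-covariance}. If needed, I would localize with stopping times and pass to the limit by monotone convergence, since all integrands are nonnegative. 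The integrability assumption on $e^{\langle x,\theta\rangle}$ made in Section \ref{sec:localization} ensures $\Lambda$ is smooth, so Itô's formula is legitimate.
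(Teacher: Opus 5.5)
Your proof is correct and follows the paper's argument: $a_0=0$ together with Lemma \ref{lem:limit} gives the embedding, and then $(1+t)a_t-\theta_t$ (your $(1+t)Y_t$) has differential $((1+t)A_t-\Id)\,dW_t$, so Itô's isometry, Lemma \ref{lem:entropy-barycenter} and Fubini with $\int_s^\infty(1+t)^{-2}dt=(1+s)^{-1}$ finish the proof; the paper only differs in differentiating $(1+t)a_t-\theta_t$ directly rather than passing through the SDE for $Y_t$. The ``$\le$'' obtained from Proposition \ref{prop:variational} is redundant once equality is shown, and in your aside the drift is $u_t=Y_t$ rather than $-Y_t$; this is harmless, since only $|u_t|^2$ enters.
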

\begin{proof}
Firstly, when $\mu$ is centered we have $a_0=0$ and Lemma \ref{lem:innovation} yields
\[
 a_t=\int_0^t A_s\,dW_s.
\]
By Lemma \ref{lem:limit}, the process $(a_t)$ converges in $L^2$, and its limit has law $\mu$. Consequently
\[
 \int_0^\infty A_t\,dW_t\laweq\mu.
\]
We now turn to the second statement. By \eqref{eq:innovation}, the process $((1+t)a_t-\theta_t)$ satisfies $d[(1+t)a_t-\theta_t]=[(1+t)A_t-\Id]dW_t$. Then, by Itô's isometry and since $a_0=\E X=0$ and $\theta_0=0$,
\[
 \E|(1+t)a_t-\theta_t|^2
 =\int_0^t\E\Tr\left[((1+s)A_s-\Id)^2\right]ds.
\]
Plugging this into Lemma \ref{lem:entropy-barycenter},
\begin{align*}
 H(\mu|\gamma)
 &=\frac12\int_0^\infty\left(\int_0^t
 \E\Tr\left[((1+s)A_s-\Id)^2\right]ds\right)\frac{dt}{(1+t)^2}\\
 &=\frac12\int_0^\infty\E\Tr\left[((1+s)A_s-\Id)^2\right]
 \left(\int_s^\infty(1+t)^{-2}dt\right)ds\\
 &=\frac12\int_0^\infty\E\Tr\left[((1+s)A_s-\Id)^2\right]\frac{ds}{1+s}\qedhere
\end{align*}
\end{proof}
Next, we present a technical matrix lemma, which can be found in \cite{Eld20}. The lemma is due to
these authors, and the proof provided in the cited reference contains an error.
\begin{lemma}\label{lem:square-root}
Let A,B be symmetric matrices, with A positive definite and B positive semi-definite. We have the inequality
\[
 \Tr\left[(\sqrt A-\sqrt B)^2\right]\leq
 \Tr\left[(A-B)^2A^{-1}\right].
\]
\end{lemma}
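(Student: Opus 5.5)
The plan is to reduce the inequality to a scalar one by expanding both sides in eigenbases, with all the matrix structure absorbed into a doubly stochastic matrix of overlap weights. Write the spectral decompositions
\[
 A=\sum_{i=1}^d a_i\,u_iu_i^T,\qquad B=\sum_{j=1}^d b_j\,v_jv_j^T,
\]
with $a_i>0$, $b_j\geq0$ and $(u_i)$, $(v_j)$ orthonormal bases. Set $c_{ij}=\langle u_i,v_j\rangle^2$. Since both bases are orthonormal, $\sum_j c_{ij}=1$ for every $i$ and $\sum_i c_{ij}=1$ for every $j$.

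First I expand the left-hand side as $\Tr A+\Tr B-2\Tr(\sqrt A\sqrt B)$. Using
\[
 \Tr(\sqrt A\sqrt B)=\sum_{i,j}c_{ij}\sqrt{a_ib_j},\qquad \Tr A=\sum_{i,j}c_{ij}a_i,\qquad \Tr B=\sum_{i,j}c_{ij}b_j,
\]
where the last two follow from the row and column sums of $(c_{ij})$, I get
\[
 \Tr\left[(\sqrt A-\sqrt B)^2\right]=\sum_{i,j}c_{ij}\left(\sqrt{a_i}-\sqrt{b_j}\right)^2 .
\]

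Next I expand the right-hand side. By cyclicity of the trace,
\[
 \Tr\left[(A-B)^2A^{-1}\right]=\Tr\left[(A-B)A^{-1}(A-B)\right]=\Tr A-2\Tr B+\Tr\left(BA^{-1}B\right).
\]
The last term equals $\sum_{i,j}c_{ij}\,b_j^2/a_i$. Combining this with the same expressions for $\Tr A$ and $\Tr B$ gives
\[
 \Tr\left[(A-B)^2A^{-1}\right]=\sum_{i,j}c_{ij}\,\frac{(a_i-b_j)^2}{a_i}.
\]

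Since $c_{ij}\geq0$, it now suffices to prove the inequality term by term. For $a>0$ and $b\geq0$,
\[
 \frac{(a-b)^2}{a}=\left(\sqrt a-\sqrt b\right)^2\,\frac{\left(\sqrt a+\sqrt b\right)^2}{a}\geq\left(\sqrt a-\sqrt b\right)^2,
\]
because $\left(\sqrt a+\sqrt b\right)^2\geq a$. Summing over $i,j$ with weights $c_{ij}$ gives the lemma.

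The only delicate step is the bookkeeping that puts both traces in the common form $\sum_{i,j}c_{ij}(\cdot)$. In particular, the linear terms $\Tr A$ and $\Tr B$ must be distributed over the double sum using the row sums and the column sums of $(c_{ij})$ respectively. Once both sides are written this way, the argument is purely scalar.
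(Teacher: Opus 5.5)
Your proof is correct, and it takes a genuinely different route from the paper.

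\textbf{Your route.} You diagonalize $A$ and $B$ separately and move everything onto the doubly stochastic overlap matrix $c_{ij}=\langle u_i,v_j\rangle^2$. This works because every trace that appears ($\Tr A$, $\Tr B$, $\Tr(\sqrt A\sqrt B)$, and $\Tr(BA^{-1}B)=\Tr(B^2A^{-1})$) has the form $\Tr[f(A)g(B)]=\sum_{i,j}c_{ij}f(a_i)g(b_j)$. The matrix inequality therefore collapses to the scalar inequality $(\sqrt a-\sqrt b)^2\le (a-b)^2/a$.

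\textbf{The paper's route.} The paper stays coordinate-free. It writes
\[
A-B=A^{1/2}(A^{1/2}-B^{1/2})+(A^{1/2}-B^{1/2})B^{1/2},
\]
so that $M_1=A^{-1/2}(A-B)$ splits as $M_2+M_3$, with $M_2=\sqrt A-\sqrt B$ and $M_3=A^{-1/2}M_2B^{1/2}$. Expanding $\Tr(M_1M_1^*)$, it then checks that both remaining pieces are nonnegative: $\Tr(M_3M_3^*)$, and the cross term
\[
\Tr(M_2M_3^*)=\Tr\bigl((A^{-1/4}M_2B^{1/4})(A^{-1/4}M_2B^{1/4})^*\bigr).
\]

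\textbf{What each buys.} Your version is more transparent and gives more information. It yields the exact defect
\[
\Tr\left[(A-B)^2A^{-1}\right]-\Tr\left[(\sqrt A-\sqrt B)^2\right]=\sum_{i,j}c_{ij}\left(\sqrt{a_i}-\sqrt{b_j}\right)^2\frac{b_j+2\sqrt{a_ib_j}}{a_i},
\]
and hence the equality cases: $b_j\in\{0,a_i\}$ whenever $c_{ij}>0$. The same mechanism also proves any comparison between trace functionals built from separable terms $\Tr[f(A)g(B)]$ by a two-variable check. Its limitation is exactly that separability requirement. A weight such as $(A+B)^{-1}$, or a term with more alternations between $A$ and $B$, does not reduce to a single sum against $c_{ij}$. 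The paper's factorization requires guessing the right algebraic splitting, but the argument is then a short positivity check that does not depend on this separable structure.
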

\begin{proof}
We have
\[
 A-B=A^{1/2}\left(A^{1/2}-B^{1/2}\right)
 +\left(A^{1/2}-B^{1/2}\right)B^{1/2}.
\]
Let $M_1=A^{-1/2}(A-B)$, $M_2=A^{1/2}-B^{1/2}$ and
$M_3=A^{-1/2}(A^{1/2}-B^{1/2})B^{1/2}$. By the
previous equation, we have
\begin{equation}\label{eq:matrix-decomposition}
 M_1=M_2+M_3.
\end{equation}
Then, by \eqref{eq:matrix-decomposition} and using symmetry and the cyclicity of the trace, we obtain
\begin{align*}
 \Tr((A-B)^2A^{-1})&=\Tr(M_1M_1^*)\\
 &=\Tr(M_2M_2^*)+\Tr(M_3M_3^*)+\Tr(M_2M_3^*)+\Tr(M_3M_2^*)\\
 &=\Tr(M_2M_2^*)+\Tr(M_3M_3^*)+2\Tr(M_2M_3^*).
\end{align*}
Since $\Tr(M_3M_3^*)\geq0$ and
\begin{align*}
 \Tr(M_2M_3^*)&=\Tr\left(A^{-1/4}M_2B^{1/2}M_2A^{-1/4}\right)\\
 &=\Tr\left((A^{-1/4}M_2B^{1/4})(A^{-1/4}M_2B^{1/4})^*\right)\geq0,
\end{align*}
we have, by symmetry,
\[
 \Tr((A-B)^2A^{-1})\geq\Tr(M_2M_2^*)=\Tr((\sqrt A-\sqrt B)^2).\qedhere
\]
\end{proof}
We now turn to the central limit theorem. The results established in this section allow us to obtain
a quantitative control of the entropy convergence in the CLT. Up to this point, our argument
follows the approach of \cite{Eld20}.
\begin{corollary}\label{cor:entropy-clt}
Let $\mu$ be an isotropic probability measure on $\R^d$, and let $(A_t)$ denote the covariance
matrix process associated with the stochastic localization. Then
\begin{equation}\label{eq:entropy-clt}
 H(\mu^n|\gamma)\leq\frac1{2n}\int_0^\infty
 \|( \E A_t)^{-1}\|\,\E\Tr\left[
 (A_t^2-\E A_t^2)^2(\E A_t^2)^{-1}\right]dt.
\end{equation}
\end{corollary}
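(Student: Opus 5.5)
Following \cite{Eld20}, the plan is to build a martingale embedding of $\mu^n$ out of $n$ independent copies of the stochastic localization and then apply Proposition~\ref{prop:variational} with a well-chosen deterministic path $m_t$.

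\textbf{Step 1: the embedding.} Let $(W^{(i)})_{i\leq n}$ be independent Brownian motions and $(A^{(i)}_t)$ the corresponding covariance processes. By \eqref{eq:covariance-embedding}, $\int_0^\infty A^{(i)}_t\,dW^{(i)}_t\laweq\mu$, so
\[
 \frac1{\sqrt n}\sum_{i=1}^n\int_0^\infty A^{(i)}_t\,dW^{(i)}_t\laweq\mu^n .
\]
To express this as a single stochastic integral against one $d$-dimensional Brownian motion, set $\Gamma_t=\bigl(\frac1n\sum_i (A^{(i)}_t)^2\bigr)^{1/2}$. On the set where $\Gamma_t$ is invertible, define $dB_t=\Gamma_t^{-1}\frac1{\sqrt n}\sum_i A^{(i)}_t\,dW^{(i)}_t$; off that set, complete $B$ with an independent Brownian motion. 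L\'evy's characterization shows that $B$ is a Brownian motion, since its quadratic variation is $\Gamma_t^{-1}\Gamma_t^2\Gamma_t^{-1}dt=\Id\,dt$. Moreover $\int\Gamma_t\,dB_t\laweq\mu^n$. The process $\Gamma_t$ is symmetric and adapted, so Proposition~\ref{prop:variational} applies.

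\textbf{Step 2: choice of $m_t$.} Take $m_t=(\E A_t^2)^{1/2}$. Isotropy gives
\[
 \int_0^\infty\E A_t^2\,dt=\E\Big[\Big(\int A_t\,dW_t\Big)^{\otimes2}\Big]=\Id
\]
by Itô's isometry, so $\int_0^\infty m_t^2\,dt=\Id$ as required. Moreover, $\int_t^\infty\E A_s^2\,ds=\E A_t-\E[a_t\otimes a_t]$-type identities combine with \eqref{eq:mean-covariance}: since $\E[a_t\otimes a_t]=\int_0^t\E A_s^2\,ds$, we get
\[
 \int_t^\infty m_s^2\,ds=\Id-\E[a_t\otimes a_t]=\E A_t .
\]
Proposition~\ref{prop:variational} then yields
\[
 H(\mu^n|\gamma)\leq\frac12\int_0^\infty\E\Tr\big[(\E A_t)^{-1}(m_t-\Gamma_t)^2\big]dt\leq\frac12\int_0^\infty\|(\E A_t)^{-1}\|\,\E\Tr\big[(\Gamma_t-m_t)^2\big]dt .
\]

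\textbf{Step 3: the key estimate.} Apply Lemma~\ref{lem:square-root} with $A=\E A_t^2=m_t^2$ (positive definite; if it is degenerate, restrict to its range) and $B=\Gamma_t^2$:
\[
 \Tr[(m_t-\Gamma_t)^2]\leq\Tr\big[(\E A_t^2-\Gamma_t^2)^2(\E A_t^2)^{-1}\big].
\]
Now $\Gamma_t^2-\E A_t^2=\frac1n\sum_i\big((A^{(i)}_t)^2-\E A_t^2\big)$ is an average of i.i.d.\ centered matrices. Expanding the square, the cross terms vanish in expectation by independence, leaving
\[
 \E\Tr\big[(\Gamma_t^2-\E A_t^2)^2(\E A_t^2)^{-1}\big]=\frac1n\E\Tr\big[(A_t^2-\E A_t^2)^2(\E A_t^2)^{-1}\big],
\]
which gives \eqref{eq:entropy-clt}.

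The main obstacle is Step 1: making the Brownian motion $B$ rigorous when $\Gamma_t$ is singular, and checking the integrability needed for Proposition~\ref{prop:variational}. I would handle this by noting that $\E A_t^2$ is typically positive definite for $\mu$ with a density, and by restricting to ranges otherwise.
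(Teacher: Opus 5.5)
Your proposal is correct and follows the paper's proof essentially step for step. You use the same embedding $\Gamma_t=\bigl(\frac1n\sum_i (A^{(i)}_t)^2\bigr)^{1/2}$ justified by L\'evy's characterization, the same choice $m_t=(\E A_t^2)^{1/2}$ with $\int_t^\infty m_s^2\,ds=\E A_t$, and then Lemma~\ref{lem:square-root} and independence to extract the factor $1/n$. The only difference is cosmetic: you get $\int_t^\infty \E A_s^2\,ds=\E A_t$ by subtracting $\E[a_t\otimes a_t]=\int_0^t\E A_s^2\,ds$ from the total mass $\Id$, whereas the paper differentiates $\E A_t$ and uses $\E A_t\to0$. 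Your extra care about singular $\Gamma_t$ is harmless but unnecessary: each $\mu_{t,\theta}$ is equivalent to $\mu$, which is not carried by a hyperplane, so $A^{(i)}_t\succ0$.
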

\begin{proof}
Let $X_1,\ldots,X_n$ be i.i.d.\ random vectors with common distribution $\mu$, and let $B_t^1,\ldots,B_t^n$
be independent standard Brownian motions, such that the pairs $(X_i,B^i)$ are independents, and
$B^i,X_i$ are independents, for all $i$. For each $i$, define
\[
 \theta_t^i=tX_i+B_t^i,
\]
and denote by $(\F_t^i)$ the natural filtration generated by $\theta_t^i$. Let $(W_t^i)$ be the Brownian motion
associated with the representation \eqref{eq:innovation} for the process $\theta_t^i$, and denote by $A_{i,t}$ the corresponding
covariance matrix process. In particular, by \eqref{eq:innovation}, we have the stochastic integral representation
\[
 X_i=\int_0^\infty A_{i,t}\,dW_t^i,\qquad i=1,\ldots,n.
\]
Thus
\[
 \frac1{\sqrt n}\int_0^\infty\sum_{i=1}^n A_{i,t}\,dW_t^i
 =\frac1{\sqrt n}\sum_{i=1}^n X_i\laweq\mu^n.
\]
Next, define the process $(\Gamma_t)=
\left(\left(\frac1n\sum_{i=1}^n A_{i,t}^2\right)^{1/2}\right)$.
For all $t\geq0$, we can then write
\begin{align*}
 \frac1{\sqrt n}\sum_{i=1}^n A_{i,t}\,dW_t^i
 &=\frac1{\sqrt n}\left(\sum_{i=1}^n A_{i,t}^2\right)^{1/2}
 \left(\sum_{i=1}^n A_{i,t}^2\right)^{-1/2}
 \sum_{i=1}^n A_{i,t}\,dW_t^i\\
 &=\frac1{\sqrt n}\left(\sum_{i=1}^n A_{i,t}^2\right)^{1/2}d\widetilde W_t.
\end{align*}
where $d\widetilde W_t=\sum_{i=1}^n
\left(\sum_{j=1}^n A_{j,t}^2\right)^{-1/2}A_{i,t}dW_t^i$ is a continuous martingale starting at 0. Its quadratic
covariation is
\[
 d[\widetilde W]_t=
 \left(\sum_{i=1}^n A_{i,t}^2\right)^{-1}
 \left(\sum_{i=1}^n A_{i,t}^2\right)dt=\Id\cdot dt.
\]
By Lévy's characterization, $W$ is a standard Brownian motion. Moreover, $(\Gamma_t)$ is an adapted
process such that
\[
 \int_0^\infty\Gamma_t\,d\tilde W_t\laweq\mu^n.
\]
Next, by \eqref{eq:mean-covariance}, $\frac{d}{dt}\E A_t=-\frac{d}{dt}\E[a_t\otimes a_t]$. Lemma \ref{lem:innovation} and Itô's isometry imply
\[
 \E[a_ta_t^T]
 =\E\left[\left(\int_0^t A_s\,dW_s\right)
 \left(\int_0^t A_s\,dW_s\right)^T\right]
 =\E\left[\int_0^t A_s^2ds\right].
\]
Then,
\begin{equation}\label{eq:mean-derivative}
 \frac{d}{dt}\E A_t=-\frac{d}{dt}\E[a_t\otimes a_t]=-\E A_t^2.
\end{equation}
Moreover, by \eqref{eq:mean-derivative} together with Lemma \ref{lem:limit}, the function $t\mapsto\E A_t$ converges to 0. Then, the process
$\left(\E A_t^2\right)_t$ satisfies
\[
 \int_t^\infty\E A_s^2\,ds
 =-\int_t^\infty\frac{d}{ds}\E A_s\,ds=\E A_t.
\]
Hence, $(m_t)=(\E A_t^2)^{1/2}$ satisfies
\[
 \int_0^\infty m_t^2dt=\E A_0=\Id
\]
by isotropy. Now, Proposition \ref{prop:variational} allows us to obtains the following bounds on relative entropy of
$\mu^n$ :
\vspace{10.06pt}
\begin{align}
 H(\mu^n|\gamma)&\leq\frac12\int_0^\infty
 \E\Tr\left[(\E A_t)^{-1}
 \left(\left(\frac1n\sum A_{i,t}^2\right)^{1/2}
 -(\E A_t^2)^{1/2}\right)^2\right]dt\notag\\
 &\leq\frac12\int_0^\infty
 \|(\E A_t)^{-1}\|\,\E\Tr\left[
 \left(\left(\frac1n\sum A_{i,t}^2\right)^{1/2}
 -(\E A_t^2)^{1/2}\right)^2\right]dt.\label{eq:entropy-before-square-root}
\end{align}
Applying Lemma \ref{lem:square-root} with $A=\E A_t^2$ and $B=\frac1n\sum_{i=1}^n A_{i,t}^2$ which are symmetric and positive
semidefinite, we obtain
\[
 \E\Tr\left[\left(\left(\frac1n\sum_{i=1}^n A_{i,t}^2\right)^{1/2}
 -(\E A_t^2)^{1/2}\right)^2\right]
 \leq\E\Tr\left[\left(\frac1n\sum_{i=1}^n A_{i,t}^2-\E A_t^2\right)^2
 (\E A_t^2)^{-1}\right].
\]
Since the matrices $(A_{i,t})_i$ are independent copies of $A_t$, we have
\begin{align*}
 \E\Tr\left[\left(\frac1n\sum_{i=1}^n
 \left(A_{i,t}^2-\E A_t^2\right)\right)^2
 (\E A_t^2)^{-1}\right]
 &=\frac1{n^2}\sum_{i=1}^n\E\Tr\left[
 \left(A_{i,t}^2-\E A_t^2\right)^2(\E A_t^2)^{-1}\right]\\
 &=\frac1n\Tr\left[\E\left(A_t^2-\E A_t^2\right)^2(\E A_t^2)^{-1}\right].
\end{align*}
Together with \eqref{eq:entropy-before-square-root}, the previous equation yields the desired result.\\
\end{proof}

\section{Bound on entropic convergence in the log-concave case}\label{sec:bounds}
We can improve the control of the covariance matrix for a log-concave probability measure in the
stochastic localization process. Indeed, the measure $\mu_t$ is $t$-uniformly log-concave, and therefore
its covariance matrix can be bounded as
\begin{equation}\label{eq:pointwise-covariance}
 A_t\preccurlyeq\frac1t\Id,\qquad\forall t\in]0,\infty[.
\end{equation}

Combining the covariance estimate of \cite{Kla24}[Theorem 61] with the results of \cite{Che26}, as explained in \cite{Biz26}, yields the following concentration inequality for the covariance process \((A_t)_{t\geq 0}\). There exists a universal constant \(c>0\) such that, for every isotropic log-concave probability measure \(\mu\) on \(\mathbb R^d\), with \(d\geq 2\),
\begin{equation}\label{eq:concentration}
\mathbb P\bigl(\|A_t\|_{\mathrm{op}}\geq 2\bigr)
\leq e^{-\frac{c}{t}},\end{equation} for all
$0<t\leq t_0:=\frac{c}{\log d}.$
We will also use the following uniform fourth-moment estimate for the covariance process \((A_t)_{t\geq 0}\), which follows from \cite{Gua25}[Corollary 2.4]; see also \cite{Kla25}[Section 5]. For every isotropic log-concave probability measure \(\mu\) on \(\mathbb R^d\),
\begin{equation}\label{eq:fourth-moment}
\mathbb E\operatorname{Tr}(A_t^4)\leq Cd,
\qquad \forall t\geq 0,\end{equation}
where \(C>0\) is a universal constant. 

To begin, we provide an upper bound on the norm of the inverse expected covariance:
\begin{lemma}\label{lem:inverse-mean}
Assume that $\mu$ is log-concave and isotropic. Then, for every $t\geq0$,
\[
 \|(\E A_t)^{-1}\|\leq1+C_P(\mu)t.
\]
\end{lemma}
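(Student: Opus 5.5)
We prove the equivalent matrix inequality $\E A_t\succcurlyeq \frac{1}{1+C_P(\mu)t}\Id$, which gives $\|(\E A_t)^{-1}\|\leq 1+C_P(\mu)t$ because $\E A_t$ is symmetric positive semi-definite. The plan is a variance decomposition along a fixed direction, combined with the Poincaré inequality for $\mu$ and Gaussian integration by parts. Log-concavity only guarantees $C_P(\mu)<\infty$; the argument uses nothing else about it. The case $t=0$ is isotropy, so fix $t>0$ and a unit vector $e\in\R^d$. Put $f(x)=\langle x,e\rangle$, $Y=\theta_t=tX+B_t$ and $g(Y)=\E[f(X)|Y]=\langle a_t,e\rangle$, using Lemma \ref{lem:conditional}.

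\textbf{Step 1 (reduction).} By \eqref{eq:mean-covariance} and isotropy,
\[
\langle \E A_t e,e\rangle = 1-V,\qquad V:=\Var(g(Y)).
\]
It therefore suffices to show $V\leq \frac{C_P t}{1+C_P t}$, where we write $C_P=C_P(\mu)$.

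\textbf{Step 2 (projecting back onto $X$).} Let $h(x)=\E[g(tx+B_t)]$, so that $h(X)=\E[g(Y)|X]$. Since $g(Y)$ is the conditional expectation of $f(X)$ given $Y$, we get
\[
V=\mathrm{Cov}(g(Y),f(X))=\mathrm{Cov}(h(X),f(X))\leq \sqrt{\Var h(X)}\,\sqrt{\Var f(X)}=\sqrt{\Var h(X)}.
\]
Hence $V^2\leq \Var h(X)$.

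\textbf{Step 3 (gradient of $h$; the key step).} Gaussian integration by parts gives $\E[\nabla g(z+B_t)]=\frac1t\E[g(z+B_t)B_t]$. Applying this at $z=tx$ yields
\[
\nabla h(x)=t\,\E[\nabla g(tx+B_t)]=\E\bigl[(g(tx+B_t)-h(x))B_t\bigr],
\]
where subtracting the constant $h(x)$ is allowed because $\E B_t=0$. For any unit vector $u$, Cauchy--Schwarz and $\E\langle B_t,u\rangle^2=t$ give
\[
|\nabla h(x)|^2\leq t\,\Var\bigl(g(Y)\,\big|\,X=x\bigr).
\]
Taking expectations and using the law of total variance,
\[
\E|\nabla h(X)|^2\leq t\,\E\Var(g(Y)|X)=t\,\bigl(V-\Var h(X)\bigr).
\]
This step is where some care is needed. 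To justify differentiating under the integral, I would first take $g$ to be the smooth, at most polynomially growing function $\nabla\Lambda(t,\cdot)$, using the exponential integrability assumed in Section \ref{sec:localization}; the general case then follows by approximation.

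\textbf{Step 4 (conclusion).} The Poincaré inequality $\Var h(X)\leq C_P\,\E|\nabla h(X)|^2$ combined with Step 3 gives $\Var h(X)\leq C_P t\,(V-\Var h(X))$, that is,
\[
\Var h(X)\leq \frac{C_P t}{1+C_P t}\,V.
\]
Together with Step 2 this yields $V^2\leq \frac{C_P t}{1+C_P t}V$. Dividing by $V$ when $V>0$, and trivially when $V=0$, we get $V\leq\frac{C_P t}{1+C_P t}$. Then Step 1 gives
\[
\langle \E A_t e,e\rangle \geq 1-\frac{C_P t}{1+C_P t}=\frac{1}{1+C_P t}
\]
for every unit vector $e$. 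Hence $\E A_t\succcurlyeq \frac{1}{1+C_P t}\Id$, which is exactly the claimed sharp bound $\|(\E A_t)^{-1}\|\leq 1+C_P(\mu)t$.
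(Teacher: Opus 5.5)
Your proof is correct and gives the same sharp constant, but it obtains the key estimate by a self-contained argument where the paper cites one. Both arguments reduce, through the variance decomposition (your Step 1, via \eqref{eq:mean-covariance} and isotropy), to showing $\Var(\E[f(X)\mid\theta_t])\le\frac{C_P(\mu)t}{1+C_P(\mu)t}\Var f(X)$ for linear $f$. The paper gets this by citing the strong data processing inequality of Klartag--Ordentlich \cite[Theorem 1.1]{Kla24b} for $X+\sqrt{s}\,G$ and substituting $s=1/t$. You prove it directly: back-projection gives $V^2\le\Var h(X)$, and Gaussian integration by parts gives $\E|\nabla h(X)|^2\le t\,(V-\Var h(X))$. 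The Poincaré inequality applied to $h$ then closes the quadratic bootstrap.

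The paper's citation keeps its proof to a few lines. Your version shows exactly where the Poincaré inequality enters. Since Steps 2--4 use only $\Var f(X)=1$ and never the linearity of $f$, they essentially reprove the cited inequality itself.

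The regularity caveat in Step 3 is easier than you suggest. Write $h(x)=\int g(y)\,\varphi_t(y-tx)\,dy$, with $\varphi_t$ the density of $\gamma_t$. Differentiating the kernel gives $\nabla h(x)=\E[g(tx+B_t)B_t]$ without differentiating $g$ at all. Moreover, $\Lambda(t,\cdot)$ is finite and smooth for every $t>0$ because of the factor $e^{-t|x|^2/2}$, so no exponential integrability is needed. In the log-concave case, $g=\langle\nabla\Lambda(t,\cdot),e\rangle$ is also $\frac1t$-Lipschitz, since $\nabla^2\Lambda(t,\cdot)=A(t,\cdot)\preccurlyeq\frac1t\Id$. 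So no approximation step is required.
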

\begin{proof}
Recall the variance decomposition
\begin{equation}\label{eq:variance-decomposition}
 \Var f(X)=\Var(\E[f(X)\mid Y])+\E[\Var(f(X)\mid Y)].
\end{equation}
In \cite[Theorem 1.1]{Kla24b}, it is shown that, for any probability measure $\mu$ and any square-integrable
function $f$,
\[
 \Var\left(\E\left[f(X)\mid X+\sqrt s\,G\right]\right)
 \le \frac{C_P(\mu)}{s+C_P(\mu)}\Var f(X),
\]
where $X$ has law $\mu$ and $G$ is a standard Gaussian random vector independent of $X$. Applying this
estimate with the time change $s=1/t$, we obtain
\begin{align*}
 \Var(\E[f(X)\mid\theta_t])
 &=\Var(\E[f(X)\mid\theta_t/t])\\
 &=\Var\left(\E\left[f(X)\,\middle|\,X+\sqrt{\frac1t}\,G\right]\right)\\
 &\le \frac{tC_P(\mu)}{1+tC_P(\mu)}\Var f(X),
\end{align*}
where $(\theta_t)$ is defined in \eqref{eq:observation}. Combining this with \eqref{eq:variance-decomposition} yields
\[
 \E[\Var(f(X)\mid\theta_t)]\ge \frac1{1+tC_P(\mu)}\Var f(X).
\]
Applying this inequality to linear functions gives, for all $t\geq0$,
\[
 \E A_t\ge \frac1{1+tC_P(\mu)}A_0
 =\frac1{1+tC_P(\mu)}\Id,
\]
by isotropy of $\mu$.
\end{proof}
Before proving our main theorem, we establish a technical lemma that bounds the quantity in
\eqref{eq:entropy-clt} :

\begin{lemma}\label{lem:matrix-fluctuation}
Let $M$ be a random symmetric matrix such that $0\preccurlyeq M\preccurlyeq\Id$ almost surely and
assume that $\E M^2$ is invertible. Then, for every $r\in(0,1]$,
\[
 \E\Tr\left[(\E M^2)^{-1}\left(M^2-\E M^2\right)^2\right]
 \leq\frac{16}{r^2}\E\Tr\left[(M-r\Id)^2\right].
\]
\end{lemma}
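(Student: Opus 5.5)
The plan is to diagonalize the deterministic matrix $S:=\E M^2$ and bound the contribution of each eigendirection separately. For each direction I will use one of two estimates, chosen according to whether the corresponding eigenvalue of $S$ is small or large compared with $r^2$. Write $S=\sum_j s_j e_je_j^T$ with $s_j>0$ (by invertibility) and an orthonormal basis $(e_j)$, and set $D:=\E(M-r\Id)^2$. Then
\[
 \E\Tr\left[S^{-1}(M^2-S)^2\right]=\sum_j s_j^{-1}\,\E\langle e_j,(M^2-S)^2e_j\rangle,
\]
and it suffices to prove, for each $j$,
\[
 s_j^{-1}\E\langle e_j,(M^2-S)^2e_j\rangle\le \frac{16}{r^2}\langle e_j,De_j\rangle.
\]
Summing over $j$ then gives $\frac{16}{r^2}\Tr D$, which is the claim.

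\textbf{First bound (variance minimization).} For any deterministic symmetric $C$, expanding gives
\[
 \E(M^2-C)^2=\E(M^2-S)^2+(S-C)^2,
\]
because the cross terms have zero mean. Hence $\E(M^2-S)^2\preccurlyeq \E(M^2-r^2\Id)^2$. Since $M$ commutes with $\Id$, we have $(M^2-r^2\Id)^2=(M-r\Id)(M+r\Id)^2(M-r\Id)$. From $0\preccurlyeq M\preccurlyeq \Id$ and $r\le1$ we get $(M+r\Id)^2\preccurlyeq 4\Id$, so $(M^2-r^2\Id)^2\preccurlyeq 4(M-r\Id)^2$. Therefore
\[
 \E\langle e_j,(M^2-S)^2e_j\rangle\le 4\langle e_j,De_j\rangle.
\]
\textbf{Second bound (crude).} We have $\E(M^2-S)^2=\E M^4-S^2\preccurlyeq \E M^4\preccurlyeq \E M^2=S$, using $M^4\preccurlyeq M^2$ (valid since $0\preccurlyeq M\preccurlyeq\Id$). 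This gives
\[
 s_j^{-1}\E\langle e_j,(M^2-S)^2e_j\rangle\le 1.
\]

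\textbf{Linking $s_j$ and $D$.} Write $r\Id=M+(r\Id-M)$, where the two summands commute. The operator inequality $(a+b)^2\preccurlyeq 2a^2+2b^2$ gives $r^2\Id\preccurlyeq 2M^2+2(M-r\Id)^2$. Taking expectations and the $(j,j)$ entry yields
\[
 r^2\le 2s_j+2\langle e_j,De_j\rangle.
\]
\textbf{Case split.} If $s_j\ge r^2/4$, the first bound gives a term at most $4D_{jj}/s_j\le 16D_{jj}/r^2$. If $s_j<r^2/4$, the linking inequality forces $D_{jj}\ge r^2/4$. The second bound then gives a term at most $1\le 4D_{jj}/r^2$. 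In both cases the per-direction inequality holds.

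The only delicate point is that $S^{-1}$ may have huge eigenvalues. This is exactly why the case split is needed: in directions where $S$ is small, $M$ must be far from $r\Id$ on average, and the crude bound suffices there.
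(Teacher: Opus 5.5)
Your proof is correct and follows essentially the same route as the paper. It diagonalizes $\E M^2$ and splits the eigendirections at the threshold $r^2/4$. For the large eigenvalues it uses that $\E M^2$ minimizes $C\mapsto \E(M^2-C)^2$ (with competitor $r^2\Id$) together with $\|M+r\Id\|\le 2$, and for the small ones the crude bound $\E M^4\preccurlyeq \E M^2$. The only cosmetic difference is how you obtain $\langle e_j,De_j\rangle>r^2/4$: you use $(a+b)^2\preccurlyeq 2a^2+2b^2$, whereas the paper uses the $L^2$ triangle inequality $\bigl(\E|(M-r\Id)u_i|^2\bigr)^{1/2}\ge r-\sqrt{m_i}$, and both suffice.
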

\begin{proof}
Write
\[
 \E M^2=\sum_{i=1}^d m_i\,u_i\otimes u_i
\]
in an orthonormal eigenbasis. Therefore,
\begin{align}
 \E\Tr\left[(\E M^2)^{-1}\left(M^2-\E M^2\right)^2\right]
 &=\sum_{i=1}^d\frac{\E|M^2u_i-\E(M^2u_i)|^2}{m_i}\notag\\
 &\leq\frac4{r^2}\sum_{m_i\geq r^2/4}\E|M^2u_i-\E(M^2u_i)|^2
 \!+\sum_{m_i<r^2/4}\frac{\E|M^2u_i-\E(M^2u_i)|^2}{m_i}.\label{eq:matrix-split}
\end{align}
For the first sum in \eqref{eq:matrix-split}, since $\E(M^2u_i)$ is the orthogonal projection of $M^2u_i$ onto the constant
vectors, for all $i$ such that $m_i\geq r^2/4$,
\begin{align*}
 \E|M^2u_i-\E(M^2u_i)|^2
 &\leq\E\left|M^2u_i-r^2u_i\right|^2
 \leq\E\,\|M+r\Id\|^2\,|Mu_i-ru_i|^2\\
 &\leq4\E\,|(M-r\Id)u_i|^2,
\end{align*}
where we used only the facts that $M$ and $r\Id$ commute, and that $\|M+r\Id\|\leq2$ since $r\in(0,1]$
and $0\preccurlyeq M\preccurlyeq\Id$. This settles the first term on the right-hand side. It remains to control the
second sum. Consider $m_i<r^2/4$. Since $0\preccurlyeq M\preccurlyeq\Id$,
\begin{equation}\label{eq:small-eigenvalue}
 \E\left|M^2u_i-\E(M^2u_i)\right|^2
 \leq\E|M^2u_i|^2=\E\langle M^4u_i,u_i\rangle
 \leq\E\langle M^2u_i,u_i\rangle=m_i,
\end{equation}
where we have used the orthogonal projection property. Furthermore, we observe that
\[
 \E|Mu_i|^2=\langle\E M^2u_i,u_i\rangle=m_i.
\]
Then,
\begin{equation}\label{eq:distance-scalar}
 \left(\E\,|(M-r\Id)u_i|^2\right)^{1/2}
 \geq r-\left(\E|Mu_i|^2\right)^{1/2}
 =r-\sqrt{m_i}>\frac r2,
\end{equation}
where we have used the triangle inequality for the first inequality. Combining \eqref{eq:small-eigenvalue} and \eqref{eq:distance-scalar}, we
obtain
\[
 \frac{\E\left|M^2u_i-\E(M^2u_i)\right|^2}{m_i}
 \leq1\leq\frac4{r^2}\E\,|(M-r\Id)u_i|^2.
\]
Thus, combining the estimates for both sums in \eqref{eq:matrix-split}, we obtain
\[
 \E\Tr\left[(\E M^2)^{-1}(M^2-\E M^2)^2\right]
 \leq\frac{16}{r^2}\sum_{i=1}^d\E\,|(M-r\Id)u_i|^2
 =\frac{16}{r^2}\E\Tr\left[(M-r\Id)^2\right],
\]
which concludes the proof.
\end{proof}
Throughout the following proof, we set $t_1$ as defined at the beginning of this section. In particular,
for all $t\leq t_1$, the concentration estimate \eqref{eq:concentration} holds.

\begin{proof}[Proof of the main theorem]
Combining Corollary \ref{cor:entropy-clt} with Lemma \ref{lem:inverse-mean} yields the following bound :
\begin{align*}
 H(\mu^n|\gamma)
 &\leq\frac1{2n}\int_0^\infty(1+C_P(\mu)t)\,
 \E\Tr\left[(A_t^2-\E A_t^2)^2(\E A_t^2)^{-1}\right]dt\\
 &\leq\frac1{2n}\int_0^\infty(1+C_P(\mu)t)^3\,
 \E\Tr\left[(A_t^2-\E A_t^2)^2\right]dt.
\end{align*}
Now, we split the integral into two parts, over $[0,t_1]$ and $[t_1,\infty)$. To begin, we analyze the integral
over $[0,t_1]$. We estimate the expectation as follows
\begin{align}
 \E\Tr\left[(A_t^2-\E A_t^2)^2\right]
 &\leq\E\Tr\left[\left(A_t-\frac{\Id}{1+t}\right)^2
 \left(A_t+\frac{\Id}{1+t}\right)^2\right]\notag\\
 &\leq\E\left[\left\|A_t+\frac{\Id}{1+t}\right\|^2
 \Tr\left(\left(A_t-\frac{\Id}{1+t}\right)^2\right)
 \ind_{\{\|A_t\|\leq2\}}\right]\notag\\
 &\quad+\E\left[\left\|A_t-\frac{\Id}{1+t}\right\|^2
 \Tr\left(\left(A_t+\frac{\Id}{1+t}\right)^2\right)
 \ind_{\{\|A_t\|>2\}}\right],\label{eq:fluctuation-events}
\end{align}
for all $t\in[0,t_1]$. For the second term, \eqref{eq:pointwise-covariance} implies
$\|A_t-\frac{\Id}{1+t}\|\leq\max\{\|A_t\|;\frac1{1+t}\}\leq\frac1t$ and thus
\begin{align}
 \E\left[\Tr\left(\left\|A_t-\frac{\Id}{1+t}\right\|^2
 \left(A_t+\frac{\Id}{1+t}\right)^2\right)\ind_{\{\|A_t\|>2\}}\right]
 &\leq\frac1{t^2}\E\left[
 \Tr\left(\left(\frac{\Id}{t}+\frac{\Id}{1+t}\right)^2\right)
 \ind_{\{\|A_t\|>2\}}\right]\notag\\
 &\leq\frac1{t^2}\Tr\left(\frac{4\Id}{t^2}\right)\Prob(\|A_t\|>2)\notag\\
 &\leq\frac{4d}{t^4}\Prob(\|A_t\|>2).\label{eq:bad-event}
\end{align}
Thus, combining \eqref{eq:bad-event} and \eqref{eq:concentration}, and using $1+C_P(\mu)t\leq1+C_P(\mu)$ for $t\leq t_1\leq1$ (up to adjusting
the universal constant $c$, we obtain :
\begin{align}
 &4\int_0^{t_1}(1+C_P(\mu)t)^3
 \E\left[\Tr\left(\left\|A_t-\frac{\Id}{1+t}\right\|^2
 \left(A_t+\frac{\Id}{1+t}\right)^2\right)
 \ind_{\{\|A_t\|>2\}}\right]dt\notag\\
 &\qquad\leq16(1+C_P(\mu))^3\int_0^{t_1}
 \frac d{t^4}\Prob(\|A_t\|>2)\,dt\notag\\
 &\qquad\leq16(1+C_P(\mu))^3\int_0^{t_1}
 \frac d{t^4}e^{-c\cdot t^{-1}}dt.\label{eq:bad-integral}
\end{align}
This quantity is bounded above by a universal constant. Indeed, taking $c'\leq c/2$ in the definition
of $t_1$, we obtain
\[
 d\int_0^{t_1}\frac{e^{-c/t}}{t^4}\,dt
 \leq de^{-c/(2t_1)}\int_0^\infty\frac{e^{-c/(2t)}}{t^4}\,dt
 =d^{1-c/(2c')}\int_0^\infty\frac{e^{-c/(2t)}}{t^4}\,dt=O(1),
\]
because $1-c/(2c')\leq0$ and the last integral is finite. Now, on the event $\{\|A_t\|\leq2\}$,
$\|A_t+\frac{\Id}{1+t}\|\leq3$.
Thus,
\begin{align}
 &4\int_0^{t_1}(1+C_P(\mu)t)^3
 \E\left[\Tr\left(\left\|A_t+\frac{\Id}{1+t}\right\|^2
 \left(A_t-\frac{\Id}{1+t}\right)^2\right)
 \ind_{\{\|A_t\|\leq2\}}\right]dt\notag\\
 &\quad\leq36(1+C_P(\mu))^3\int_0^{t_1}
 \E\Tr\left[\left(A_t-\frac{\Id}{1+t}\right)^2\right]dt\notag\\
 &\quad\leq36(1+C_P(\mu))^3\int_0^\infty(1+t)\,
 \E\Tr\left[\left(A_t-\frac{\Id}{1+t}\right)^2\right]dt\notag\\
 &\quad=72(1+C_P(\mu))^3H(\mu|\gamma)\label{eq:good-integral}
\end{align}
where we used \eqref{eq:entropy-covariance} in the last passage.

We now turn to the analysis of the integral over $[t_1,\infty)$. Applying Lemma \ref{lem:matrix-fluctuation} with $M=tA_t$ and
$r=\frac t{1+t}$ yields
\begin{equation}\label{eq:tail-fluctuation}
 \E\Tr\left[(A_t^2-\E A_t^2)^2(\E A_t^2)^{-1}\right]
 \leq16\left(\frac{1+t}{t}\right)^2
 \E\Tr\left[\left(A_t-\frac{\Id}{1+t}\right)^2\right].
\end{equation}
Then,
\begin{align}
 &\frac1{2n}\int_{t_1}^\infty(1+C_P(\mu)t)
 \E\Tr\left[(A_t^2-\E A_t^2)^2(\E A_t^2)^{-1}\right]dt\notag\\
 &\leq\frac{16}{n}\int_{t_1}^\infty\frac{(1+t)^2(1+C_P(\mu)t)}{t^2}
 \E\Tr\left[\left(A_t-\frac{\Id}{1+t}\right)^2\right]dt\notag\\
 &\leq\frac{32}{n}\frac{(1+t_1)(1+C_P(\mu)t_1)}{t_1^2}H(\mu|\gamma)\label{eq:logarithmic-tail}
\end{align}

Combining \eqref{eq:logarithmic-tail}, \eqref{eq:good-integral} and \eqref{eq:bad-integral}, we have $$H(\mu^n\lvert\gamma) \le \frac 1n\left(C_1+\left(72(1+C_P(\mu))^3 + \frac{32(1+t_1)(1+C_P(\mu)t_1)}{t_1^2}\right)H(\mu\lvert\gamma)\right).$$

where $C_1$ is a universal constant. Finally, as mentioned in the introduction, the resolution of the KLS conjecture in \cite{Biz26b} or \cite{Son26} implies the existence
of a universal constant $C_{KLS} > 0$ such that $C_P(\mu) \le C_{KLS}$ for every isotropic log-concave
probability measure $\mu$. Since the function $t\mapsto(1+t)(1+C_P(\mu)t)/t^2$ is non-increasing on $(0,\infty)$, we can bound the quotient by
$C'/t_1^2$ where $C'$ is a universal constant depending only on $C_{KLS}$. Using the definition of $t_1$ in \eqref{eq:concentration}, we finally obtain 
\[
 H(\mu^n|\gamma)\lesssim\frac1n\left(1+(\log d)^2H(\mu|\gamma)\right),
\]
which proves the first bound.

Next, we exploit the second bound provided by Lemma \ref{lem:inverse-mean}. Combining Corollary \ref{cor:entropy-clt} with Lemma
\ref{lem:inverse-mean} yields
\begin{equation}\label{eq:entropy-second-bound}
 H(\mu^n|\gamma)\leq\frac1{2n}\int_0^\infty(1+C_P(\mu)t)\,
 \E\Tr\left[(A_t^2-\E A_t^2)^2(\E A_t^2)^{-1}\right]dt.
\end{equation}

We again split the integral \eqref{eq:entropy-second-bound} into two parts. On $[0,1]$, by \eqref{eq:fourth-moment} and Lemma \ref{lem:inverse-mean}, we have
\begin{align}
 \int_0^1(1+C_P(\mu)t)
 \E\Tr\left[(A_t^2-\E A_t^2)^2(\E A_t^2)^{-1}\right]dt
 &\leq\int_0^1(1+C_P(\mu)t)^3
 \E\Tr\left[(A_t^2-\E A_t^2)^2\right]dt\notag\\
 &\leq\int_0^1(1+C_P(\mu)t)^3\E\Tr[A_t^4]dt\notag\\
 &\leq\frac{Cd}{4C_P(\mu)}(1+C_P(\mu))^4,\label{eq:small-time-second}
\end{align}
where we have used the fact that $C_P(\mu)\geq1$ for every isotropic and log-concave measure $\mu$. On
$[1,\infty)$, as for \eqref{eq:logarithmic-tail}, Lemma \ref{lem:matrix-fluctuation} and \eqref{eq:entropy-barycenter} yield
\begin{align}
 &\frac1{2n}\int_1^\infty(1+C_P(\mu)t)
 \E\Tr\left[(A_t^2-\E A_t^2)^2(\E A_t^2)^{-1}\right]dt\notag\\
 &\leq\frac8n\int_1^\infty\frac{(1+C_P(\mu)t)(1+t)^2}{t^2}
 \E\Tr\left[\left(A_t-\frac{\Id}{1+t}\right)^2\right]dt\notag\\
 &\leq\frac{32(1+C_P(\mu))}{n}H(\mu|\gamma),\label{eq:large-time-second}
\end{align}
where we have used that the function $t\mapsto\frac{(1+C_P(\mu)t)(1+t)}{t^2}$ is decreasing, then bounded by $2(1+C_P(\mu))$ on $[1,\infty)$. Combining \eqref{eq:large-time-second} and \eqref{eq:small-time-second} and using the bound $C_P(\mu)\le C_{KLS}$, we obtain 
\[
 H(\mu^n|\gamma)\leq\frac{C_2}{n}\left(d+H(\mu|\gamma)\right),
\]
where $C_2=\frac{C(1+C_{KLS})^4}{4}+32(1+C_{KLS})$. Finally, the slicing theorem \cite{Kla24c} implies that
$H(\mu|\gamma)\leq Cd$ for isotropic log-concave probability measures, this proves the second bound.
\end{proof}
\begin{remark}\label{rem:wasserstein}
Applying Talagrand's inequality to Theorem \ref{thm:main} yields
\[
 W_2(\mu^n,\gamma)\lesssim
 \sqrt{\frac1n\left(1+(\log d)^2H(\mu\mid\gamma)\right)},
\]
where $W_2(\nu_1,\nu_2)$ denotes the Wasserstein distance between the probability measures $\nu_1$ and $\nu_2$.
Moreover, the authors of \cite{Cou19} obtain
\[
 W_2(\mu^n,\gamma)\lesssim\sqrt{\frac{dC_P(\mu)}{n}}.
\]
Hence, our estimate does not outperform that of \cite{Cou19} in general, even if the KLS conjecture
is true, but it has the advantage of incorporating the relative entropy $H(\mu\mid\gamma)$, which can provide
finer control when $\mu$ is close to the Gaussian distribution. Applying Talagrand's inequality to our
second bound yields a Wasserstein estimate of the same order as theirs.
\end{remark}
We adopt the same strategy to prove Theorem \ref{thm:uniform} :
\begin{proof}[Proof of Theorem \ref{thm:uniform}]
Since $\mu_t$ is $(\kappa+t)$-uniformly log-concave, we have
\begin{equation}\label{eq:uniform-covariance}
 0\preccurlyeq A_t\preccurlyeq\frac1{\kappa+t}\Id,\qquad t\geq0.
\end{equation}
Consequently, $\E A_t^2\preccurlyeq(\kappa+t)^{-1}\E A_t$. Together with \eqref{eq:mean-derivative}, this gives
\[
 \frac{d}{dt}((\kappa+t)\E A_t)=\E A_t-(\kappa+t)\E A_t^2\succcurlyeq0.
\]
Since $A_0=\Id$, it follows that $\E A_t\succcurlyeq\frac{\kappa}{\kappa+t}\Id$ and hence
\begin{equation}\label{eq:uniform-inverse}
 \|(\E A_t)^{-1}\|\leq\frac{\kappa+t}{\kappa}.
\end{equation}
We apply Lemma \ref{lem:matrix-fluctuation} with $M=(\kappa+t)A_t$ and $r=(\kappa+t)/(1+t)$. By \eqref{eq:uniform-covariance} and $\kappa\leq1$, we have
$0\preccurlyeq M\preccurlyeq\Id$ and $0<r\leq1$. Hence,
\[
 \E\Tr\left[(A_t^2-\E A_t^2)^2(\E A_t^2)^{-1}\right]
 \leq16\left(\frac{1+t}{\kappa+t}\right)^2
 \E\Tr\left[\left(A_t-\frac{\Id}{1+t}\right)^2\right].
\]
Combining this estimate with Corollary \ref{cor:entropy-clt} and \eqref{eq:uniform-inverse}, we obtain
\begin{align*}
 H(\mu^n|\gamma)&\leq\frac8{n\kappa}\int_0^\infty\frac{(1+t)^2}{\kappa+t}
 \E\Tr\left[\left(A_t-\frac{\Id}{1+t}\right)^2\right]dt\\
 &\leq\frac8{n\kappa^2}\int_0^\infty(1+t)\,
 \E\Tr\left[\left(A_t-\frac{\Id}{1+t}\right)^2\right]dt\\
 &\leq\frac{16}{n\kappa^2}H(\mu|\gamma),
\end{align*}
where we used $\kappa(1+t)\leq\kappa+t$ in the second inequality and Proposition \ref{prop:entropy-covariance} in the last one.
\end{proof}


\begin{thebibliography}{99}
\bibitem[Art04]{Art04}
\textit{S. Artstein, K. Ball, F. Barthe and A. Naor}, Solution of Shannon's problem on the
monotonicity of entropy, J. Amer. Math. Soc. 17, No. 4, 975--982 (2004).

\bibitem[Bal02]{Bal02}
\textit{K. Ball, F. Barthe and A. Naor}, Entropy jumps in the presence of a spectral gap, Duke
Math. J. 119, No. 1, 41--63 (2003).

\bibitem[Bal12]{Bal12}
\textit{K. Ball and V. H. Nguyen}, Entropy jumps for isotropic log-concave random vectors and
spectral gap, Studia Math. 213, No. 1, 81--96 (2012).

\bibitem[Bar86]{Bar86}
\textit{A. R. Barron}, Entropy and the central limit theorem, Ann. Probab. 14, No. 1, 336--342
(1986).

\bibitem[Biz26]{Biz26}
\textit{P. Bizeul}, Optimal $MM^*$ bounds for convex bodies, arXiv:2607.29458 (2026).

\bibitem[Biz26b]{Biz26b}
\textit{P. Bizeul, B. Klartag and J. Lehec}, Presenting a proof of the Kannan--Lovász--Simonovits
conjecture, arXiv:2610.05474 (2026).

\bibitem[Car91]{Car91}
\textit{E. A. Carlen and A. Soffer}, Entropy production by block variable summation and central
limit theorems, Comm. Math. Phys. 140, No. 2, 339--371 (1991).

\bibitem[Che26]{Che26}
\textit{Y. Chen and B. Klartag}, Digesting the proof of the sharp thin-shell inequality,
arXiv:2607.23307 (2026).

\bibitem[Cou19]{Cou19}
\textit{T. A. Courtade, M. Fathi and A. Pananjady}, Existence of Stein kernels under a spectral
gap, and discrepancy bounds, Ann. Inst. Henri Poincaré Probab. Stat. 55, No. 2, 777--790
(2019).

\bibitem[Den26]{Den26}
\textit{C.-S. Deng, L. Wang and L. Xu}, Quantitative bounds for high dimensional entropic CLT,
arXiv:2604.05861 (2026).

\bibitem[Eld20]{Eld20}
\textit{R. Eldan, D. Mikulincer and A. Zhai}, The CLT in high dimensions: quantitative bounds
via martingale embedding, Ann. Probab. 48, No. 5, 2494--2524 (2020).

\bibitem[Gua25]{Gua25}
\textit{Q. Guan}, On tail probability of the covariance matrix in Eldan's stochastic localization,
arXiv:2508.14943 (2025).

\bibitem[Joh04]{Joh04}
\textit{O. Johnson and A. Barron}, Fisher information inequalities and the central limit theorem,
Probab. Theory Related Fields 129, No. 3, 391--409 (2004).

\bibitem[Kan95]{Kan95}
\textit{R. Kannan, L. Lovász and M. Simonovits}, Isoperimetric problems for convex bodies and
a localization lemma, Discrete Comput. Geom. 13, No. 3--4, 541--559 (1995).

\bibitem[Kla21]{Kla21}
\textit{B. Klartag and E. Putterman}, Spectral monotonicity under Gaussian convolution, Ann.
Fac. Sci. Toulouse Math. (6) 32, No. 5, 939--967 (2023).

\bibitem[Kla23]{Kla23}
\textit{B. Klartag}, Logarithmic bounds for isoperimetry and slices of convex sets, Ars Inven.
Anal., Paper No. 4, 17 pp. (2023).

\bibitem[Kla24]{Kla24}
\textit{B. Klartag and J. Lehec}, Isoperimetric inequalities in high-dimensional convex sets, Bull.
Amer. Math. Soc. 62, No. 4, 575--642 (2025).

\bibitem[Kla24b]{Kla24b}
\textit{B. Klartag and O. Ordentlich}, The strong data processing inequality under the heat flow,
IEEE Trans. Inform. Theory 71, No. 5, 3317--3333 (2025).

\bibitem[Kla24c]{Kla24c}
\textit{B. Klartag and J. Lehec}, Affirmative resolution of Bourgain's slicing problem using
Guan's bound, Geom. Funct. Anal. 35, 1147--1168 (2025).

\bibitem[Kla25]{Kla25}
\textit{B. Klartag and J. Lehec}, Thin-shell bounds via parallel coupling, arXiv:2507.15495 (2025).

\bibitem[Leh13]{Leh13}
\textit{J. Lehec}, Representation formula for the entropy and functional inequalities, Ann. Inst.
Henri Poincaré Probab. Stat. 49, No. 3, 885--899 (2013).

\bibitem[Lip01]{Lip01}
\textit{R. S. Liptser and A. N. Shiryaev}, Statistics of random processes I: General theory, 2nd
ed., Stochastic Modelling and Applied Probability, Vol. 5, Springer, Berlin (2001).

\bibitem[Son26]{Son26}
\textit{Z. Song and X. Zhang}, An $O(1)$ bound for the KLS constant, arXiv:2610.01447 (2026).
\end{thebibliography}
\end{document}